\theoremstyle{definition} 
\theoremstyle{definition} 
\theoremstyle{plain} \newtheorem{lemma}{Lemma}
\theoremstyle{plain} \newtheorem{proposition}{Proposition}
\begin{document}
%

\title{Reliability Enhancement for VR Delivery in Mobile-Edge Empowered  Dual-Connectivity \textcolor{blue}{Sub-6 GHz} and mmWave  HetNets }

\author{Zhuojia~Gu, 
        Hancheng~Lu,
        Peilin~Hong,
        and Yongdong~Zhang

\IEEEcompsocitemizethanks{
\IEEEcompsocthanksitem This work was supported in part by the National Science Foundation of China (No.61771445, No.91538203).
\IEEEcompsocthanksitem Zhuojia Gu, Hancheng Lu, Peilin Hong are with CAS Key Laboratory of Wireless-Optical Communications, University of Science and Technology of China, Hefei 230027, China (email: guzj@mail.ustc.edu.cn; hclu@ustc.edu.cn; plhong@ustc.edu.cn).
\IEEEcompsocthanksitem Yongdong Zhang is with the Department of Electronic Engineering and Information Science (EEIS), University of Science and Technology of China, Hefei 230027, China (email: zhyd73@ustc.edu.cn).


}
}

\maketitle

%


\begin{abstract}
The reliability of current virtual reality (VR) delivery is low due to the limited resources on VR head-mounted displays (HMDs) and the transmission rate bottleneck of sub-6 GHz networks. In this paper, we propose a dual-connectivity \textcolor{blue}{sub-6 GHz} and mmWave heterogeneous network architecture empowered by mobile edge capability. The core idea of the proposed architecture is to utilize the complementary advantages of \textcolor{blue}{sub-6 GHz} links and mmWave links to conduct a collaborative edge resource design, which aims to improve the reliability of  VR delivery. From the perspective of stochastic geometry, we analyze the reliability of VR delivery and theoretically demonstrate that \textcolor{blue}{sub-6 GHz} links can be used to enhance the reliability of VR delivery despite the large mmWave bandwidth. Based on our analytical work, we formulate a joint caching and computing optimization problem with the goal to maximize the reliability of VR delivery. By analyzing the coupling caching and computing strategies at HMDs, \textcolor{blue}{sub-6 GHz} and mmWave base stations (BSs), we further transform the problem into a multiple-choice multi-dimension knapsack problem. A best-first branch and bound algorithm and a difference of convex programming algorithm are proposed to obtain the optimal and sub-optimal solution, respectively. Numerical simulations demonstrate the performance improvement using the proposed algorithms, and reveal that caching more monocular videos at \textcolor{blue}{sub-6 GHz} BSs and more stereoscopic videos at mmWave BSs can improve the VR delivery reliability efficiently.
\end{abstract}

\vspace{-3mm}

\begin{IEEEkeywords}
Virtual reality, sub-6 GHz and mmWave heterogeneous networks, reliability enhancement, mobile edge computing, stochastic geometry.
\end{IEEEkeywords}

%
\IEEEpeerreviewmaketitle

\section{Introduction}
In recent years, the interest of virtual reality (VR) in academia and industry has been unprecedented \cite{hu2020cellular}.
To achieve the immersive experience, the most important task is to increase the resolution of VR applications to the resolution of  human eyes \cite{elbamby2018toward}.
It is impractical to cache all rendered VR videos locally at VR head-mounted displays (HMDs) in advance, especially in the scenarios where user interactions are required. Moreover, users prefer to experience VR videos anytime and anywhere, compared with being tied down by wired cables. Therefore, VR videos are expected to be delivered  real-time over wireless networks.
However, the bandwidth required for delivering  VR videos is usually 4-5 times the bandwidth required for delivering  conventional high-definition videos \cite{taghavinasrabadi2017adaptive},
which puts tremendous pressure on the network bandwidth. On the other hand, VR applications are typical ultra-reliable low-latency communication (URLLC) applications, and the end-to-end delay exceeding 20 ms can cause dizziness of users \cite{han2019mobile}. Thus, it is essential to ensure the high reliability of VR delivery networks, which means that ensuring more data packets delivered to  HMDs  within the latency requirement of VR applications for a satisfactory VR viewing experience. In this regard, two fundamental problems for  VR delivery should be addressed: \textit{1) How to enhance the reliability of VR delivery when delivering a large amount of video data over wireless networks,} and \textit{  2) how to enhance the reliability of VR delivery  when  performing time-consuming  projecting and rendering of raw VR viewpoints?}

For the first problem, using the current sub-6 GHz  network is limited by the bottleneck of wireless bandwidth and cannot meet the URLLC requirements of VR applications.
The sufficient spectrum resources brought by millimeter wave (mmWave) communication are considered to be the key enablers of 5G applications. Sufficient bandwidth resources make it possible to transmit large-capacity VR videos in real time. 
However, mmWave signals are prone to be blocked and suffer severe fading. This poses a great challenge to  VR applications, because users may frequently experience blockage caused by buildings, human bodies, and environmental facilities. To tackle the problem, dual-connectivity (DC) \textcolor{blue}{sub-6 GHz} and mmWave heterogeneous networks (HetNets) are promising for enhancing the reliability of VR delivery. DC is an appealing technique for performance enhancement in wireless networks, and it allows the user to have two simultaneous connections and utilize both radio resources \cite{kibria2018stochastic}. Inherently, DC \textcolor{blue}{sub-6 GHz} and mmWave HetNets utilize the complementary advantages of the wide signal coverage in \textcolor{blue}{sub-6 GHz} networks and sufficient spectrum resources  in mmWave networks, which is suitable for reliability enhancement of VR delivery.

For the second problem,
computation offloading is seen as a key enabler to provide the required video projection rendering capabilities for VR delivery. Edge computing servers are suitable for performing high CPU- and GPU-intensive computing tasks. By providing efficient computing resources close to users, mobile edge computing (MEC) strikes a balance between communication delay and computing delay. Specifically, the HMD can upload the tracking information (e.g., game actions or viewpoint preferences) to the MEC server to offload computing tasks. In this case, the MEC server uses the computing resources to project monocular videos (MVs) into \textcolor{blue}{stereoscopic videos (SVs)}, and sends the downlink SVs to the HMD. In addition, the caching capability of MEC servers can help save the projection and rendering time.

Based on the above discussion, in this paper, we propose a mobile-edge empowered DC \textcolor{blue}{sub-6 GHz} and mmWave HetNet architecture, and aim to conduct a collaborative design of edge network resources to enhance the reliability of VR delivery.

\subsection{Related Work}
\textit{1) Reliability Enhancement for Wireless Transmission:}
Ultra-reliable communication has become the vital support of mission-critical 5G applications. Mei \textit{et al.} \cite{mei2018latency} proposed  a reliability guaranteed  resource allocation scheme in vehicle-to-vehicle (V2V) networks. Guo \textit{et al.} \cite{guo2019resource} extended the research to perform a reliability-aware resource allocation in V2V networks based only on large-scale fading channel information. Popovski \textit{et al.} \cite{popovski2019wireless} analyzed the fundamental tradeoffs between ultra-reliability and some other metrics (i.e., latency, bandwidth occupancy and energy consumption) for wireless access.
Recently, the joint application of  \textcolor{blue}{sub-6 GHz} and mmWave in HetNets has been proposed as  an attractive solution to improve the network performance \cite{elshaer2016downlink}.
Semiari \textit{et al.} \cite{semiari2017joint, shi2018downlink} introduced DC mode into \textcolor{blue}{sub-6 GHz} and mmWave HetNets to ensure data transmission reliability while considering the user association and scheduling. DC implements carrier aggregation between sites through base station coordination to realize diversified association, which can reduce the handover failure and wireless link failure, thus improving the transmission reliability.
DC mode has been extended to multi-connectivity (MC) mode  in \cite{she2018improving, wolf2018reliable}, which enables simultaneous connections of multiple  air interfaces such as cellular, device-to-device (D2D) and WiFi links to further support ultra-reliable applications.


\textcolor{blue}{
\textit{2) DC-assisted mobile edge computing:} Some prior works proposed to use DC technology for mobile edge computing to achieve energy-efficient computation offloading performance. Guo \textit{et al.} \cite{guo2018computation} studied the problem of computation offloading for multi-access mobile edge computing in ultra-dense networks to reduce the overall energy consumption. Guo \textit{et al.} \cite{guo2018mobile} also extended the scenario in \cite{guo2018computation} to the energy-constrained Internet of Things (IoT) devices with DC capability to tackle the conflict between resource-hungry mobile applications and energy-constrained IoT devices. Authors in \cite{wu2018green, li2020intelligent}  investigated DC-assisted computation offloading scheme in non-orthogonal multiple access system to minimize the total energy consumption.
Note that these works mainly focused on utilizing DC-assisted mobile edge computing to achieve green-oriented computation offloading for data services. However, the benefits of both edge caching and the complementary sub-6 GHz and mmWave bands to the delay-sensitive VR application in the DC-assisted mobile edge computing scenario have not been investigated.}

\textit{3) Mobile Edge Computing for VR Delivery:}
Authors in \cite{bastug2017toward, mangiante2017vr, elbamby2018toward} inspired the use of MEC resources for VR delivery to obtain  the potential performance gain, but they did not establish theoretical formulation and provided efficient algorithms. Some efficient task offloading algorithms in wireless cellular networks with MEC were proposed in \cite{you2016energy, dinh2017offloading, mao2017stochastic, wang2017computation, bi2020joint}.
The MEC task offloading technology was introduced for the delivery of VR video in \cite{sun2019communications, dang2019joint, chakareski2020viewport}.
Sun \textit{et al.} \cite{sun2019communications} focused on balancing the local and edge computing resources to maximize the average bandwidth usage for VR delivery.
Dang \textit{et al.} \cite{dang2019joint} extended the scenario in \cite{sun2019communications} to an edge fog computing network to achieve an economical computing offloading scheme that minimized the average delay for VR delivery.
Chakareski \textit{et al.} \cite{chakareski2020viewport} proposed an edge server cooperation framework to efficiently offload the local computing tasks.
It is worth noting that \cite{dang2019joint, sun2019communications, chakareski2020viewport} focused on improving the  average  bandwidth usage or delivery latency in MEC empowered VR delivery networks. Nevertheless, little work has focused on improving the reliability of VR delivery.

\subsection{Motivation and Contributions}
As presented in existing related works, wireless channels and connections have a significant impact on the reliability of data transmission. In the context of VR delivery,  these wireless factors have not been well studied.
\textcolor{blue}{
First, the transmission delay and the reliability of wireless VR delivery will be randomly affected by the path loss, fading, and signal blockage of wireless channels.  However, the impact of wireless channel fluctuation on the performance of VR delivery is largely overlooked in the existing literatures.
For example, the authors in \cite{dang2019joint} simplified the characteristics of wireless channels as a deterministic value of the minimum allowable transmission rate of each VR viewpoint, without considering the randomness of wireless channels. To this end, some probabilistic analysis is required to more accurately characterize the impact of wireless channel fluctuation on the transmission delay and the reliability of wireless VR delivery.
}
Second, although the  dual-connectivity \textcolor{blue}{sub-6 GHz} and mmWave technology has been proved to improve the reliability of data transmission, existing works on VR delivery mostly assumed single-connectivity (i.e., one wireless interface) for a HMD (e.g., see \cite{you2016energy, dinh2017offloading, mao2017stochastic, wang2017computation, bi2020joint, dang2019joint, sun2019communications, chakareski2020viewport}). These works mainly focused on the task offloading of VR video projection and rendering in the MEC scenario.  Nevertheless, DC is considered as a promising technology for realizing URLLC applications, \textcolor{blue}{while the resource coordination for VR delivery in DC sub-6 GHz and mmWave HetNets with edge caching and computing capability is challenging when the interface diversity is implemented.}
On the one hand, it is essential to decide whether to deliver the viewpoint over \textcolor{blue}{sub-6 GHz} links or  mmWave links when the requested viewpoint is not cached locally.
This should not only consider the influence of the wireless  channel conditions, but also consider whether the requested viewpoint  is cached at \textcolor{blue}{sub-6 GHz} BSs ($\mu$BSs) or mmWave BSs (mBSs), as well as the impact of  the limited computation resources of HMDs, $\mu$BSs and mBSs on the computation delay.
On the other hand, a proper resource coordination between \textcolor{blue}{sub-6 GHz} networks and mmWave networks is vital for adapting the computation-intensive VR videos.
Specifically, local cache can save the transmission delay of viewpoints at the cost of occupying limited local caching  capacity.
Caching simultaneously at $\mu$BSs and mBSs can reduce the transmission delay of some popular viewpoints with a higher probability,  but at the cost of occupying more caching capacity of BSs  compared with caching at either $\mu$BSs or mBSs.
In addition, the caching and computing strategies in DC \textcolor{blue}{sub-6 GHz} and mmWave HetNets are coupled.
If MVs are cached at $\mu$BSs or mBSs, the caching  capacity can be saved, but the computation time for projection is increased. While caching SVs save the computation time at the cost of more caching  capacity.

To address the aforementioned issues, in this paper, we utilize tools from stochastic geometry to model the DC \textcolor{blue}{sub-6 GHz} and mmWave channels and theoretically analyze the reliability of VR delivery by comprehensively considering the influence of communication, caching, and computation (3C). We then  perform the resource coordination and jointly optimize  the caching and computing strategy, aiming  to ensure higher reliability of VR delivery.
The main contributions of this paper are summarized as follows:
\begin{itemize}
  \item \textcolor{blue}{We present a DC sub-6 GHz and mmWave HetNet architecture empowered by edge caching and computing capability, aiming to meet the requirement of ultra reliability of  VR delivery. Based on this architecture, we conduct a collaborative design of caching and computing resource utilization that adapts to the complementary advantages of sub-6 GHz and mmWave links by utilizing a stochastic geometry analytical framework.}
  \item We derive closed-form expressions for the reliability of VR delivery in the DC \textcolor{blue}{sub-6 GHz} and mmWave HetNet. We propose a link selection strategy based on the  minimum-delay delivery, and derive the VR delivery probability over \textcolor{blue}{sub-6 GHz} links and mmWave links. The relationship between the delivery probability and the difference of caching and computing strategy in \textcolor{blue}{sub-6 GHz} and mmWave tiers is provided. We theoretically demonstrate that \textcolor{blue}{sub-6 GHz} links can be used to enhance the reliability of VR delivery despite the large mmWave bandwidth.
  \item \textcolor{blue}{Based on the analytical work, the coupling 3C resource allocation can be formulated as a joint caching and computing optimization problem, which is designed for the feasibility of practical implementation.}
      By analyzing the coupling relationship between caching and computing strategies at HMDs, $\mu$BSs and mBSs, we transform the problem into a multiple-choice multi-dimension knapsack problem (MMKP). We propose a best-first branch and bound algorithm (BFBB) by calculating the upper and lower bounds of the MMKP to obtain the optimal solution. To further reduce the complex of the algorithm, the problem is transformed into a continuous optimization problem, and the difference of convex programming technique is utilized to obtain a sub-optimal solution.
  \item We conduct numerical evaluation to validate the theoretical analysis of reliability with respect to several key parameters, such as blockage density, CPU cycles, and cache size  of $\mu$BSs/mBSs/HMDs.
  Numerical simulations show great promise of the proposed DC \textcolor{blue}{sub-6 GHz} and mmWave HetNet architecture to enhance the reliability of  VR delivery. The results also reveal that it is more advantageous to cache MVs at $\mu$BSs and cache SVs at mBSs for enhancing the reliability efficiently.

\end{itemize}

The rest of this paper is organized as follows. We first introduce the system model in Section \uppercase\expandafter{\romannumeral2}. We then analyze the reliability of VR delivery over DC \textcolor{blue}{sub-6 GHz} and mmWave HetNets and provide the link selection strategy to formulate the joint caching and computing problem in Section \uppercase\expandafter{\romannumeral3}. In Section \uppercase\expandafter{\romannumeral4}, we first provide the BFBB algorithm based on the computation of upper and  lower bounds of the optimal solution, and a difference of convex programming algorithm with lower complexity is proposed to maximize the reliability of VR delivery. Numerical simulations are provided in Section \uppercase\expandafter{\romannumeral5}, and Section \uppercase\expandafter{\romannumeral6} concludes this paper with summary.

\begin{spacing}{1.42}
\section{System Model}
\subsection{Network Model}
\begin{figure}
\centering
\includegraphics [width = 3.7in]{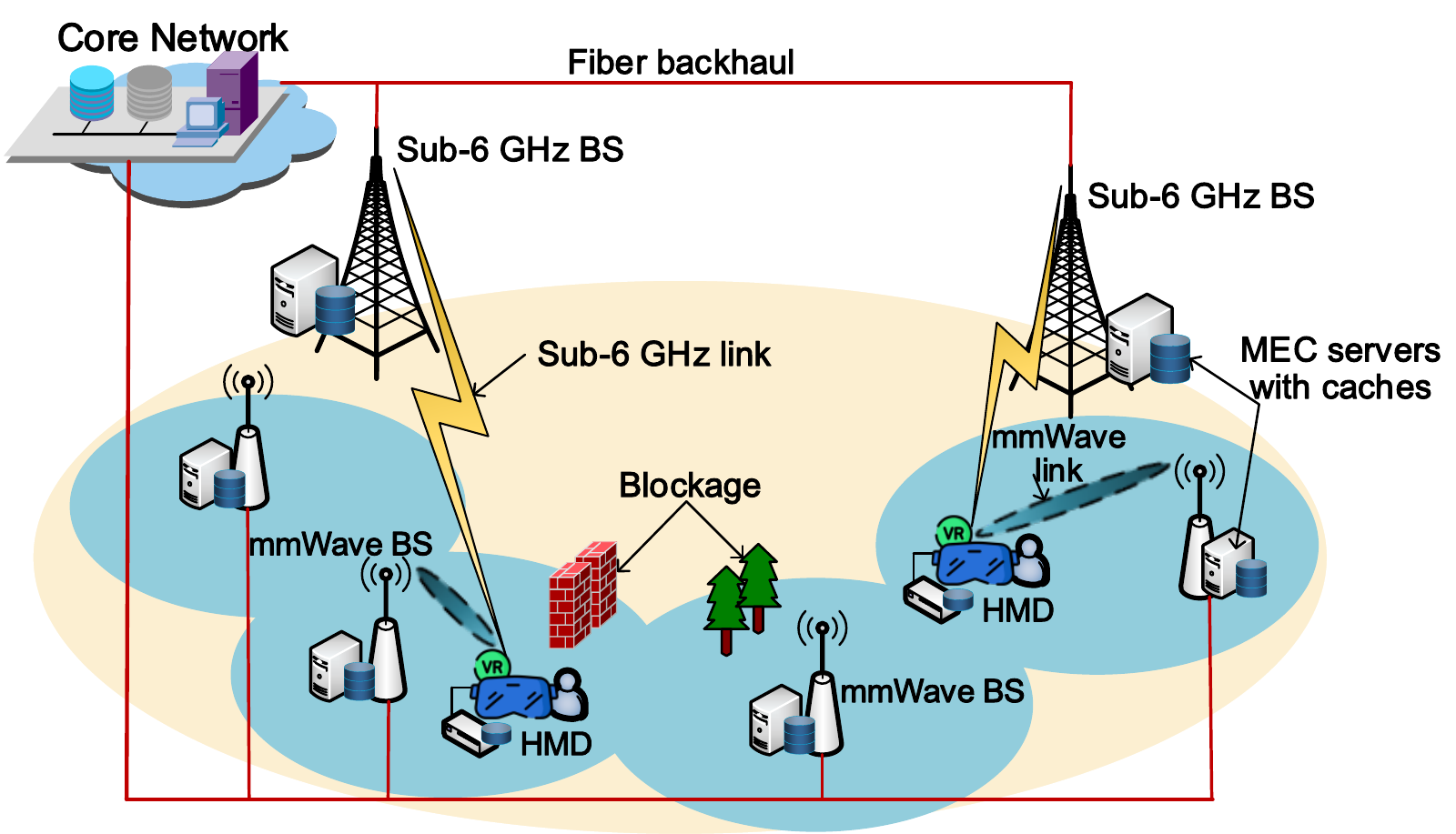}
\caption{Dual-connectivity \textcolor{blue}{sub-6 GHz} and mmWave heterogeneous network architecture empowered by mobile edge capability.}
\label{fig:1}
\end{figure}

As illustrated in Fig. \ref{fig:1}, we consider the VR delivery in a two-tier dual-connectivity (DC) \textcolor{blue}{sub-6 GHz} and mmWave  HetNet.
\textcolor{blue}{Similar to \cite{Yi2018, Biswas2019, Kuang2019}, the locations of \textcolor{blue}{sub-6 GHz} BSs ($\mu$BSs), mmWave BSs (mBSs) and HMDs are modeled as three independent and homogeneous Poisson Point Processes (PPPs)\footnote{\textcolor{blue}{PPPs are generally used for modeling the scenarios that do not consider hot-spot areas (e.g., in rural areas). If considering hot-spot areas, the Poisson cluster process (PCP) can be a more realistic model which captures the clustering nature of base stations and users in hot-spot areas \cite{Chun2015}. Modeling the locations of base stations as PCPs will have an impact on the outage and coverage performance of the HetNet, which will be considered in our future work.}}, which are denoted by $\Phi_{\mu}$, $\Phi_{m}$, $\Phi_{H}$ of densities $\lambda_{\mu}$, $\lambda_{m}$, $\lambda_{H}$, respectively.}
DC is implemented at HMDs by  performing packet data convergence protocol (PDCP) layer integration \cite{semiari2019integrated}, so a HMD can be simultaneously connected to a $\mu$BS and a mBS. Instead of delivering 360$^\circ$ VR videos, $\mu$BSs and mBSs only deliver the requested MVs and SVs. As mentioned above, MVs are  projected and rendered using the computing resources at BSs or  HMDs  to create SVs. The set of all viewpoints is denoted by $\mathcal{J} = \{ 1, 2, \cdots, J \}$. Each viewpoint $j \in \mathcal{J} $ corresponds to an MV and an SV. The data sizes of the $j$-th MV and SV are denoted by $d_j^M$ and $d_j^S$, respectively. Note that $d_j^S$ is at least twice larger than $d_j^M$, i.e., $d_j^S /d_j^M \geq 2$, due to the creation of stereoscopic video.

In the \textcolor{blue}{sub-6 GHz} tier, all channels undergo independent identically distributed (i.i.d.) Rayleigh fading. Without loss of generality, when a  HMD located at the origin $o$ requests the $j$-th viewpoint from the associated $\mu$BS, its received signal-to-interference-plus-noise ratio (SINR) is given by,
\begin{equation}\label{SINR-mu}
  \Upsilon_j^\mu = \frac{P_{\mu} h_j^{\mu} r^{-\alpha_\mu} }{I_j^{\mu} + \sigma_{\mu}^2  },
\end{equation}
where $P_{\mu}$ is the transmit power of $\mu$BS, $h_j^{\mu}$ is the Rayleigh channel gain between the  HMD and its serving $\mu$BS which follows the exponential distribution. $r^{-\alpha_\mu}$ is the path loss with the distance $r$, where $\alpha_\mu$ is the path loss exponent. $I_j^{\mu} = \sum_{n \in \Phi_{\mu} \backslash b_\mu } P_{\mu} h_j^{\mu,n} r^{-\alpha_\mu,n}$ denotes the inter-cell interference, where $b_\mu$ denotes the associated $\mu$BS. $\sigma_{\mu}^2$ is the noise power of a \textcolor{blue}{sub-6 GHz} link.

In the mmWave tier, unlike the conventional \textcolor{blue}{sub-6 GHz} counterpart, mmWave transmissions are highly sensitive to blockage. We adopt a two-state statistical blockage model for each mmWave link as in \cite{bai2014coverage}, such that the probability of the link to be LOS or NLOS is a function of the distance between the  HMD and its serving mBS. Assume that the distance between them is $r$, then the probability that a link of length $r$ is LOS or NLOS can be modeled as
\begin{equation}
    \rho_\mathrm{L}(r) = \mathrm{e}^{-\kappa r},\ \rho_\mathrm{N}(r) = 1 - \mathrm{e}^{-\kappa r},
\end{equation}
respectively, \textcolor{blue}{where $\kappa$ is a parameter determined by the density and the average size of the blockages \cite{bai2014analysis}. Under dual-connectivity mode, we consider the blockage effects for the mmWave tier by using the defined LOS/NLOS probability function.}
Assume that the antenna arrays at mBSs and HMDs perform directional beamforming with the main lobe directed towards the dominant propagation path and having less radiant energy in other directions. For tractability in the analysis, we adopt a sectorial antenna pattern \cite{7105406}. Denote $\theta$ as the main lobe beamwidth, and $M$ and $m$ as the directivity gain of main and side lobes, respectively. Then the random antenna gain/interference $G$ between the mBS and the HMD has 3 patterns with different probabilities, which is given as
\vspace{-5mm}

\begin{small}
\begin{numcases}{G=} \label{random-antenna-gain}
M^2, \quad \text{with prob.} \  (\frac{\theta}{2\pi} )^2, \nonumber  \\
Mm, \quad \text{with prob.} \  \frac{2 \theta (2\pi - \theta)}{(2\pi)^2}, \label{Gx-definition} \\
m^2, \quad \text{with prob.} \  (\frac{2\pi-\theta}{2\pi} )^2. \nonumber
\end{numcases}
\end{small}
Independent Nakagami fading is assumed for each link. Parameters of Nakagami fading $N_\mathrm{L}$ and $N_{\mathrm{N}}$ are assumed for LOS and NLOS links, respectively.
Therefore, when the HMD requests the $j$-th viewpoint from its associated mBS, the received SINR is given by
\begin{equation}\label{SINR-m}
  \Upsilon_j^{m} = \frac{P_{m} h_j^{m} G r^{-\alpha_{m}} }{ I_j^m +  \sigma_{m}^2  },
\end{equation}
where $P_{m}$ is the transmit power of the mBS, $h_j^{m}$ is the Nakagami channel fading which follows Gamma distribution. The path loss exponent $\alpha_{m} = \alpha_{\mathrm{L}}$ when it is a LOS link and $\alpha_{m} = \alpha_{\mathrm{N}}$ when it is an NLOS link. $I_j^m = \sum_{n \in \Phi_{m} \backslash b_m } P_{m} h_j^{m,n} r^{-\alpha_m,n}$, where $b_m$ denotes the associated mBS.  $\sigma^2_{m}$ is the noise power of a mmWave link.
The rate of \textcolor{blue}{sub-6 GHz} and mmWave links are given by the Shannon's formula as $R_j^l = B_l \log_2 (1 + \Upsilon_j^{l}), l \in \{ \mu, m \}$, where $B_l$ denotes the subchannel bandwidth of \textcolor{blue}{sub-6 GHz} or mmWave links.

\subsection{Caching and Computing Model}
\textcolor{blue}{The probability of the service request of the HMD for the $j$-th viewpoint is $p_j, j \in \mathcal{J}$. Considering all $J$ viewpoints, we have $\sum_{j=1}^{J} p_j =1$.
}
$\mu$BSs, mBSs, and HMDs all have caching and computing capabilities. The cache size at $\mu$BSs, mBSs, and HMDs is $C^{\mu}$, $C^{m}$, and $C^{H}$, respectively. The caching decision for the MV and SV of the $j$-th viewpoint is denoted by $y_j^{q,M} \in \{ 0, 1 \}$ and $y_j^{q, S} \in \{ 0,1 \}, q \in \{ \mu,m,H \}$.
$y_j^{q, \omega } = 1, \omega \in \{ M,S \}$ indicates that the MV or SV of the $j$-th viewpoint is cached at device $q$, otherwise $y_j^{q, \omega } = 0$. With the cache size constraint, we have $\sum_{j=1}^{J} d_j^{M} y_j^{q,M} + d_j^S y_j^{q,S} \leq C^{q}$.

The computing decisions at $\mu$BSs, mBSs, and  HMDs are considered to process the projection and rendering. The CPU-cycle frequency of $\mu$BSs, mBSs, and  HMDs is denoted by $f_q, q \in \{ \mu,m,H \}$. The average energy consumption constraint of $\mu$BSs, mBSs, and  HMDs is denoted by $E_q, q \in \{ \mu,m,H \}$.
Define $\varepsilon$ as the number of computation cycles required to process the projection and rendering of one bit input.

\textcolor{blue}{
According to \cite{mao2017survey, yuan2006energy, zhang2013energy}, the energy consumption of a CPU cycle can be expressed as $k_q = \eta_q f_q^2$, where $f_q$ is the CPU-cycle frequency of device $q$, and $\eta_q$ is a constant related to the hardware architecture of device $q$.
}
The computing decision for the $j$-th viewpoint is denoted by $z_j^{q} \in \{ 0, 1 \}$, where $z_j^q = 1$ indicates that the projection is computed at device $q$, otherwise $z_j^q = 0$.
\textcolor{blue}{
Then, the average energy consumption of a device for processing a computing task of a viewpoint can be calculated by averaging all $J$ viewpoints, which can be expressed as $\sum_{j=1}^{J} p_j \varepsilon k_q d_j^M  z_j^q$.
Considering that the HMD is energy-constrained, the average energy consumption of the device for processing a computing task of a viewpoint should be limited, otherwise the onboard battery will be depleted quickly, and the device will heat up due to the overload of the CPU, which will degrade the user experience. Therefore, a constraint of the average energy consumption of the device $\sum_{j=1}^{J} p_j \varepsilon k_q d_j^M  z_j^q \leq E^q$ should be satisfied, which can ensure that the CPU of the device will not be overloaded.
}

In the case where the requested viewpoint cannot be found in HMD cache, $\mu$BS cache or mBS cache, the requested viewpoint is retrieved from the core network through the fiber backhaul links with an extra backhaul retrieving delay $\tau_j^{r}$. We assume that the backhaul transmission rate is $R_j^{b}$, and the backhaul capacity constraint is $B^b$.
\textcolor{blue}{
The notations used in Sec. II to Sec. IV are summarized in Table I.
}

\begin{table*}
  \centering
  \scriptsize
  \textcolor{blue}{\caption{Summary of Notations} }
  \resizebox{\textwidth}{4.1cm}{
  \label{notation-table}
  \begin{tabular}{|m{1.9cm}|m{6.0cm}|m{1.9cm}|m{6.0cm}|}
    \hline
    \textbf{Notation}   & \textbf{Description}        & \textbf{Notation}   & \textbf{Description}  \\\hline
    $\Phi_{\mu} \ / \ \Phi_{m} \ / \  \Phi_{H}$  &   PPP of $\mu$BSs / mBSs / HMDs  &  $\lambda_{\mu} \ / \ \lambda_{m} \ / \ \lambda_{H}$ &  density of $\mu$BSs / mBSs  / HMDs    \\\hline
     $P_{\mu} \ / \  P_{m}$  &   Transmit power of  $\mu$BSs / mBSs      &
     $B_\mu \ / \ B_m$  & Bandwidth for each user at sub-6 GHz / mmWave     \\\hline
     $\alpha_{\mathrm{L}} \ / \ \alpha_{\mathrm{N}}$  & Path loss exponent of LOS and NLOS  &
     $N_\mathrm{L} \ / \ N_\mathrm{N}$   &   Nakagami fading parameter for LOS / NLOS link   \\\hline
     $h_j^\mu \ / \ h_j^m$ &  Channel fading of sub-6 GHz / mmWave links  &
     $M \ / \ m$  & Mainlobe antenna gain / sidelobe antenna gain      \\\hline
     $\Upsilon_j^\mu \ / \ \Upsilon_j^m$   &  Received SINR at the HMD from  $\mu$BSs / mBSs  & $R_j^{\mu} \ / \ R_j^{m}$  &  Data rate of sub-6 GHz / mmWave links     \\\hline
     $\theta$  & Mainlobe beamwidth  &  $\kappa$   &  Blockage density     \\\hline
     $\mathcal{J}$  &  Set of viewpoints   &  $J$  &  The number of viewpoints       \\\hline

     $C^{\mu} \ / \ C^{m} \ / \ C^{H}$   &   The cache size at $\mu$BSs / mBSs /  HMDs    &
     $d_j^M  \ / \ d_j^S$  &  Size of MVs / SVs       \\\hline
     $p_j$  & Request probability of the $j$-th viewpoint  &  $\delta$  &  Skewness of the viewpoint popularity      \\\hline

     $f_\mu \ / \ f_m \ / \ f_H$  &  CPU cycle of  $\mu$BSs / mBSs / HMDs    &
     $\eta_\mu \ / \ \eta_m \ / \ \eta_H$  & Energy efficiency coefficient of HMDs / $\mu$BSs / mBSs   \\\hline
     $T_j$  &  End-to-end delay threshold  of the $j$-th viewpoint    & $\tau_j^\mu \ / \ \tau_j^m$  &  End-to-end latency of the $j$-th viewpoint over the sub-6 GHz / mmWave link  \\\hline
     $A_j^\mu \ / \ A_j^m$  &  Probability of the sub-6 GHz / mmWave link being selected to deliver the $j$-th viewpoint  &
     $\rho_\mathrm{L}(r) \ / \ \rho_\mathrm{N}(r)$  & LOS / NLOS probability of mmWave links with length $r$  \\\hline
     $\varepsilon$  &  The number of computation cycles required for 1 bit input  &      $x_{jk}$  &  Joint caching and computing decision of the $j$-th viewpoint  \\\hline
     $y_j^{q, M} \ / \ y_j^{q, S}$  &  Caching decision for the MV / SV of the $j$-th viewpoint  at device $q$  &
     $z_j^q$  &  Computing decision of the $j$-th viewpoint at device $q$  \\\hline
     $\mathcal{R}_j^{l}$  & Reliability of delivering the $j$-th viewpoint over sub-6 GHz / mmWave links  &
     $\mathcal{R}_j$  &  Reliability of delivering the $j$-th viewpoint in DC sub-6 GHz and mmWave HetNets  \\\hline
     $\tau_j^{l,t} \ / \ \tau_j^{l,c} \ / \ \tau_j^{l,b}$, \ \ $l \in \{ \mu, m \}$   &  Transmission / computing / backhaul retrieving delay  of the $j$-th viewpoint over the sub-6 GHz / mmWave link  &
    $\xi_{jk}^q \ / \ \zeta_{jk}^q \ / \ \varphi_{jk}$  &  Caching occupancy / computing energy consumption / backhaul cost of the $j$-th viewpoint for the $k$-th strategy   \\\hline

  \end{tabular}}
\end{table*}

\section{ Reliability Analysis of VR delivery and Problem Formulation }
In order to investigate the reliability performance of VR delivery in DC \textcolor{blue}{sub-6 GHz} and mmWave  HetNets, we refer to the reliability defined by 3GPP\cite{3GPP}, which is the probability of experiencing an end-to-end latency below the threshold required by the targeted service. Accordingly, using the law of total probability, the reliability of VR delivery can be expressed as,
\begin{equation}\label{reliability-def}
  \mathcal{R} = \sum_{j=1}^{J} p_j \mathbb{P}[\tau_j^{\mu} < T_j \cup \tau_j^{m} < T_j ],
\end{equation}
where $T_j$ is the end-to-end latency threshold\textcolor {blue}{\footnote{\textcolor{blue}{Instead of assuming a pre-determined minimum allowable transmission rate as in \cite{dang2019joint}, we assume a pre-determined end-to-end delay threshold for VR viewpoints, which is a more recognized and practical indicator to ensure the user experience of VR videos, typically no more than 20 ms \cite{han2019mobile}.}}} required by the $j$-th viewpoint, $\tau_j^l, l \in \{ \mu, m \}$ denotes the end-to-end latency when the $j$-th viewpoint is retrieved over a \textcolor{blue}{sub-6 GHz} link or mmWave link. Delay contributions to the end-to-end VR delivery latency include the over-the-air transmission delay $\tau_j^{l,t}$,  the computing delay $\tau_j^{l,c}$, the sensor sampling delay $\tau_j^{s}$, the display refresh delay $\tau_j^{d}$ and the backhaul retrieving delay $\tau_j^{l,b}$ if the requested viewpoint is not cached \cite{elbamby2018toward}. Thus, when the $j$-th viewpoint is delivered over a \textcolor{blue}{sub-6 GHz} link or mmWave link, the end-to-end delay is calculated as
\begin{equation}\label{sum-delay}
  \tau_j^{l} = \tau_j^{l,t} + \tau_j^{l,c} + \tau_j^{l,b} + \tau_j^s + \tau_j^d, j \in \mathcal{J}, l \in \{ \mu, m\}.
\end{equation}
Note that  $\tau_j^{l,t}$ is a random variable which is affected by the channel uncertainty of wireless environments and the data size of the delivered viewpoint. The value of $\tau_j^{l,c}$ can be calculated  when the caching and computing strategy is determined.  Whether the end-to-end delay contains $\tau_j^{l,b}$ depends on whether the requested viewpoint is cached.
$\tau_j^{s}$ and $\tau_j^{d}$ are assumed to be constants.

\subsection{Reliability Analysis over DC \textcolor{blue}{sub-6 GHz} and mmWave HetNets}
The reliability of  VR delivery is mainly affected by the channel uncertainty of wireless environments. In wireless environments where temporary outages are common due to impairments in SINR, VR's non-elastic traffic behavior poses additional difficulty. \textcolor{blue}{In this subsection, we utilize the statistical wireless channel state information for sub-6 GHz links and mmWave links described in Sec. II-A to analyze the reliability of VR delivery.} Utilizing tools from stochastic geometry, tractable expressions can be obtained to characterize the reliability of  VR delivery.

The computing delay $\tau_j^{l,c} = \frac{\varepsilon d_j^M } {f_q}, q \in \{ \mu, m, H \}$ when the $j$-th MV is calculated into SV at device $q$. Considering the caching and computing strategy for the $j$-th viewpoint, the computing delay $\tau_j^{l,c}$ can be expressed as
\begin{align}\label{tau-c}
\tau_j^{l,c} = \frac{\varepsilon d_j^M  (y_j^{H, M} \| y_j^{l, M}) (z_j^H \| z_j^l) } {z_j^{l, M} f_l + (1 - z_j^{l, M}) f_H },
\end{align}
where $\cdot\|\cdot$ is the logical OR operator. And the backhaul retrieving delay $\tau_j^{l,b}$ can be expressed as
\begin{align}\label{tau-b}
\tau_j^{l,b} = (1 - y_j^{l,M} \| y_j^{l, S}) \tau_j^{r}.
\end{align}
\textcolor{blue}{Note that the transmission delay over sub-6 GHz links or mmWave links is affected by the wireless channel fluctuation. Thus, the transmission delay is a random variable related to the distance between communicating nodes, the channel fading and the blockage probability for mmWave links. When the computing delay $\tau_{j}^{l,c}$ and the backhaul retrieving delay $\tau_j^{l,b}$ are obtained under a given caching and computing strategy, we can obtain the transmission delay threshold $T_j^{l,t}$ for the $j$-th viewpoint, which is defined as
\begin{align}
  T_j^{l, t} = T_j - \tau_j^{l,c} - \tau_j^{l,b}.
\end{align}
Then the calculation of the reliability of VR delivery can be transformed into the calculation of the probability that the transmission delay is less than the threshold $T_j^{l,t}$.
}

The data size of the $j$-th delivered viewpoint over wireless links also depends on  the caching and computing strategy, which can be calculated as
\begin{align}
  D_j^{l} = y_j^{l, M} z_j^{l} d_j^S + y_j^{l, M} (1 - z_j^{l}) d_j^M + y_j^{l, S} d_j^S + (1 - y_j^{l,M}) (1 - y_j^{l, S}) d_j^S,
\end{align}
Then, the reliability of delivering the $j$-th viewpoint over  \textcolor{blue}{sub-6 GHz} or mmWave links is a function of $D_j^l$ and $T_j^{l,t}$, which can be expressed as
\begin{align}
  \mathcal{R}_j^{l}(D_j^l, T_j^{l, t}) & = \mathbb{P}[\tau_j^{l} < T_j] =  \mathbb{P}[\tau_j^{l,t } < T_j^{l,t}]
   \overset{(a)}= \mathbb{P}[R_j^{l} > D_j^\omega / T_j^{l,t}]  , l \in \{ \mu, m \}, \label{rel-def}
\end{align}
where (a) follows from $\tau_j^{l,t } = D_j^l/ R_j^{l,t}$,  $D_j^l \in \{ d_j^M, d_j^S \}$.

\begin{proposition}\label{prop1}
When the $j$-th viewpoint is delivered to the HMD over \textcolor{blue}{sub-6 GHz} links, the reliability of VR delivery is given as,
\begin{equation}\label{muwave-reliability}
  \mathcal{R}_j^{\mu}(D_j^\mu, T_j^{\mu, t}) = \sum_{i = 1}^{q} w_{i} \mathrm{e}^{r_{i} + \beta_{\mu}(\nu_j^{\mu}, r_i)}  f_\mu(r_i),
\end{equation}
where $w_{i} = \frac{r_{i}} {(q + 1)^2 [L_{q + 1}(r_{i})]^2 }$, $r_{i}$ is the $i$-th zero of $L_{q}(r)$, $L_{q}(r)$ denotes the Laguerre polynomials, and $q$ is a  parameter balancing the accuracy and complexity. $\nu_j^{\mu} = 2^{\frac{D_j^\mu}{ T_j^{\mu, t} B_{\mu}}} - 1$, $f_\mu(r) = 2 \pi \lambda_\mu r \mathrm{e}^{- \pi \lambda_\mu r^2}$, $\beta_\mu(\nu_j^{\mu}, r) = -\nu_j^{\mu} r^{\alpha_\mu} \sigma_{\mu}^2 - \pi  \lambda_\mu r^2 H_{\delta} (\nu_j^{\mu}) + \pi \lambda_\mu s^{\delta}  \Gamma(1 + \delta) \Gamma(1 - \delta ) $,  $s = \nu_j^{\mu} r^{\alpha_\mu} P_\mu^{-1}$, $\delta = 2 / \alpha_\mu$, $\Gamma(\cdot) $ is the Gamma function, and $H_{\delta}(x) \triangleq {_2}F_{1}(1,\delta; 1+\delta; -x) $ is the Gauss hypergeometric function.
\end{proposition}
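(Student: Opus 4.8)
The plan is to recognize \eqref{rel-def} as a coverage-type probability $\mathcal{R}_j^{\mu}=\mathbb{P}[\Upsilon_j^{\mu}>\nu_j^{\mu}]$ and to evaluate it by the standard stochastic-geometry recipe for a Rayleigh-faded PPP network, leaving a single one-dimensional integral that is then handled by Gauss--Laguerre quadrature; the weights $w_i$ and the nodes $r_i$ (zeros of $L_q$) in the statement are exactly what that quadrature produces. First I would convert the rate constraint into an SINR constraint: since $R_j^{\mu}=B_{\mu}\log_2(1+\Upsilon_j^{\mu})$, the event $\{R_j^{\mu}>D_j^{\mu}/T_j^{\mu,t}\}$ coincides with $\{\Upsilon_j^{\mu}>\nu_j^{\mu}\}$ for $\nu_j^{\mu}=2^{D_j^{\mu}/(T_j^{\mu,t}B_{\mu})}-1$, matching the statement. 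I would then condition on the distance $r$ to the serving $\mu$BS: under nearest-BS association in the homogeneous PPP $\Phi_{\mu}$ of density $\lambda_{\mu}$, the contact distance has density $f_\mu(r)=2\pi\lambda_{\mu}r\,\mathrm{e}^{-\pi\lambda_{\mu}r^2}$ (the void probability that no point lies within radius $r$ is $\mathrm{e}^{-\pi\lambda_{\mu}r^2}$), so $\mathcal{R}_j^{\mu}=\int_0^\infty \mathbb{P}[\Upsilon_j^{\mu}>\nu_j^{\mu}\mid r]\,f_\mu(r)\,\mathrm{d}r$.

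The core step is the conditional coverage probability. Because the desired-link fading $h_j^{\mu}$ is exponential (Rayleigh), the event $\{\Upsilon_j^{\mu}>\nu_j^{\mu}\mid r\}$ equals $\{h_j^{\mu}>s(I_j^{\mu}+\sigma_{\mu}^2)\}$ with $s=\nu_j^{\mu}r^{\alpha_\mu}/P_{\mu}$, so taking the expectation over $h_j^{\mu}$ converts the probability into the product of a noise factor $\mathrm{e}^{-s\sigma_{\mu}^2}$ and the Laplace transform $\mathcal{L}_{I_j^{\mu}}(s)=\mathbb{E}[\mathrm{e}^{-sI_j^{\mu}}]$ of the aggregate interference. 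I would evaluate $\mathcal{L}_{I_j^{\mu}}(s)$ through the probability generating functional of the PPP restricted to the interfering region $\{x:\|x\|>r\}$; after averaging over the exponential marks this gives $\mathcal{L}_{I_j^{\mu}}(s)=\exp\!\big(-2\pi\lambda_{\mu}\int_r^\infty \tfrac{\nu_j^{\mu}r^{\alpha_\mu}t}{t^{\alpha_\mu}+\nu_j^{\mu}r^{\alpha_\mu}}\,\mathrm{d}t\big)$.

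The hard part is reducing this interference integral to closed form. Substituting $t=ru$ factors out a clean $r^2$ and reduces it to $\int_1^\infty \tfrac{\nu_j^{\mu}u}{u^{\alpha_\mu}+\nu_j^{\mu}}\,\mathrm{d}u$; I would then split this as $\int_0^\infty-\int_0^1$, where the full-plane part evaluates via a Beta-function (reflection-formula) identity to the $\Gamma(1+\delta)\Gamma(1-\delta)$ contribution with prefactor $\pi\lambda_{\mu}s^{\delta}$ ($\delta=2/\alpha_\mu$), while the excised near-field part is an incomplete integral that reduces to the $H_{\delta}$ term with prefactor $\pi\lambda_{\mu}r^2$, using the integral representation of the Gauss hypergeometric function $H_{\delta}$ defined in the statement. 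Collecting the noise exponent together with these two contributions yields precisely $\mathbb{P}[\Upsilon_j^{\mu}>\nu_j^{\mu}\mid r]=\mathrm{e}^{\beta_\mu(\nu_j^{\mu},r)}$.

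Finally, I would substitute back to obtain $\mathcal{R}_j^{\mu}=\int_0^\infty \mathrm{e}^{\beta_\mu(\nu_j^{\mu},r)}f_\mu(r)\,\mathrm{d}r$, which has no elementary antiderivative. To recover the stated finite sum I would write the integrand as $\mathrm{e}^{-r}\big[\mathrm{e}^{r}\mathrm{e}^{\beta_\mu(\nu_j^{\mu},r)}f_\mu(r)\big]$ and apply the $q$-point Gauss--Laguerre rule $\int_0^\infty \mathrm{e}^{-r}g(r)\,\mathrm{d}r\approx\sum_{i=1}^q w_i\,g(r_i)$ with nodes $r_i$ the zeros of $L_q$ and weights $w_i=r_i/[(q+1)^2(L_{q+1}(r_i))^2]$, giving $\sum_{i=1}^q w_i\,\mathrm{e}^{r_i+\beta_\mu(\nu_j^{\mu},r_i)}f_\mu(r_i)$. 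I expect the main obstacle to be the closed-form evaluation of the interference integral, i.e.\ identifying the correct hypergeometric representation and keeping the $\delta=2/\alpha_\mu$ exponents and the signs consistent when the full-plane and near-field pieces are recombined; once the integrand is cast in the $\mathrm{e}^{-r}$-weighted form, the quadrature step is routine.
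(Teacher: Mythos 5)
Your proposal is correct and follows essentially the same route as the paper's proof in Appendix A: converting the rate event to an SINR threshold $\nu_j^{\mu}$, conditioning on the nearest-$\mu$BS distance with density $f_\mu(r)$, exploiting the exponential fading to factor the conditional probability into a noise term times the interference Laplace transform, evaluating that transform via the PGFL of the PPP with the Gamma-function/Gauss-hypergeometric closed form, and finally applying $q$-point Gauss--Laguerre quadrature to the remaining one-dimensional integral. The only difference is expository: you spell out the split of the interference integral into the full-plane ($\Gamma(1+\delta)\Gamma(1-\delta)$) and near-field ($H_\delta$) pieces, which the paper compresses into its step (c), and your sign/argument bookkeeping in recombining them is consistent with the paper's own convention.
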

\begin{proof}
Please refer to Appendix \ref{AppendixA}.
\end{proof}

The delivery reliability (\ref{muwave-reliability}) is in the form of the complementary cumulative distribution function (CCDF) of SINR over the \textcolor{blue}{sub-6 GHz} tier. The reliability is monotonically decreasing with the SINR threshold $\nu_{j}^{\mu}$, which indicates that the reliability increases with the increase of $T_j^{\mu, t}$, while decreases with the increase of $D_j^{\mu}$.

\begin{proposition}\label{prop2}
When the $j$-th viewpoint  is delivered to the HMD over mmWave links, the reliability of VR delivery is given as,
\begin{equation}\label{mmWave-reliability}
  \mathcal{R}_j^{m}(D_j^m, T_j^{m, t})  =  \sum_{i = 1}^{q} w_{i} \mathrm{e}^{r_{i}}  \beta_{m}(\nu_j^{m}, r_i)    f_{m}(r_i),
\end{equation}
where $\beta_{m}(\nu_j^{m}, r_i) = \sum_{\ell \in \{ \mathrm{L, N}\}} \rho_{\ell}(r_i) \sum_{k=1}^{N_{\ell}}   (-1)^{k+1}      \binom{N_{\ell}}{k}   \mathrm{e}^{ -\frac{k \eta_{\ell} \nu_j^{m} r_{i}^{\alpha_{\ell}}  \sigma_{m}^2 }{P_{m} G}} \mathcal{L}(r_i)$, $\nu_j^{m} = 2^{\frac{D_j^m}{ T_j^{m, t} B_{m}}} - 1$, $f_m(r) = 2 \pi \lambda_m r \mathrm{e}^{- \pi \lambda_m r^2}$, $\eta_\ell = N_\ell (N_\ell !)^{-\frac{1}{N_\ell}}$, and
\begin{small}
\begin{equation} \label{Laplace}
  \mathcal{L}(r) = \prod_{n \in \{ \mathrm{L,N}\}} \prod_{G}  \mathrm{exp} \left[ -2 \pi \lambda_{m} p_G \sum_{u=1}^{N_n} \binom{N_n}{u} \frac{r^{-\frac{1}{\alpha_n} \left(u - \frac{2}{\alpha_n}\right) }} {u \alpha_{n} - 2}  {_2}\textit{F}_{1}  \left(N_n, u - \frac{2}{\alpha_n}; 1 + u - \frac{2}{\alpha_n}; -s   r^{-\frac{1}{\alpha_n}} \right)    \right],
\end{equation}
\end{small}
$\!\!$where $p_G$ is the probability of the random antenna gain defined in (\ref{random-antenna-gain}), and $s = \frac{k \eta_n G \nu_j^{m} r^{\alpha_n}  \sigma_{m}^2} {M^2 N_n}$.
\end{proposition}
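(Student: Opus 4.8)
The plan is to recognize that the reliability in (\ref{rel-def}) is nothing but the complementary CDF of the mmWave SINR, i.e. the coverage probability $\mathbb{P}[\Upsilon_j^m>\nu_j^m]$, where the threshold $\nu_j^m = 2^{D_j^m/(T_j^{m,t}B_m)}-1$ is obtained by inverting Shannon's formula exactly as in the derivation preceding Proposition~\ref{prop1}. First I would condition on the distance $r$ to the serving (nearest) mBS, whose density for a homogeneous PPP is $f_m(r)=2\pi\lambda_m r\,\mathrm{e}^{-\pi\lambda_m r^2}$, and simultaneously apply the law of total probability over the LOS/NLOS state of the serving link, which occurs with probabilities $\rho_\mathrm{L}(r)$ and $\rho_\mathrm{N}(r)$ from the blockage model. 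This produces the outer sum $\sum_{\ell\in\{\mathrm{L,N}\}}\rho_\ell(r)(\cdots)$ appearing in $\beta_m$.

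Next, using (\ref{SINR-m}), the event $\Upsilon_j^m>\nu_j^m$ is equivalent to $h_j^m > \nu_j^m r^{\alpha_\ell}(I_j^m+\sigma_m^2)/(P_m G)$. The new ingredient relative to the Rayleigh case of Proposition~\ref{prop1} is that the Nakagami gain $h_j^m$ is Gamma with integer shape $N_\ell$, so its CCDF is no longer a single exponential. I would therefore invoke the standard tight bound $\mathbb{P}[h>x]\approx 1-(1-\mathrm{e}^{-\eta_\ell x})^{N_\ell}$ with $\eta_\ell=N_\ell(N_\ell!)^{-1/N_\ell}$ and expand it binomially into $\sum_{k=1}^{N_\ell}(-1)^{k+1}\binom{N_\ell}{k}\mathrm{e}^{-k\eta_\ell x}$. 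Substituting the threshold for $x$ and taking expectation over the interference field then factorizes each term into a deterministic noise exponential $\mathrm{e}^{-k\eta_\ell\nu_j^m r^{\alpha_\ell}\sigma_m^2/(P_mG)}$ times the Laplace transform $\mathcal{L}_{I_j^m}(s)$ of the aggregate interference evaluated at $s\propto k\eta_\ell\nu_j^m r^{\alpha_\ell}/(P_mG)$; this reproduces the inner $k$-sum of $\beta_m$ and isolates $\mathcal{L}(r)$ as the remaining object to evaluate.

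The main obstacle, carrying the bulk of the work, is the computation of this interference Laplace transform. I would model $I_j^m$ as a marked PPP in which each interferer carries an independent fading gain (Gamma with shape $N_n$, $n\in\{\mathrm{L,N}\}$, according to its own blockage state) and an independent antenna gain $G$ drawn from the three-pattern distribution (\ref{random-antenna-gain}) with probabilities $p_G$. Applying the probability generating functional of the PPP gives $\mathcal{L}(r)=\exp\!\big(-2\pi\lambda_m\int_r^\infty(1-\mathbb{E}_{h,G}[\mathrm{e}^{-sP_m hGx^{-\alpha_n}}])\,x\,\mathrm{d}x\big)$; averaging over the Gamma fading turns the inner expectation into $(1+sP_mGx^{-\alpha_n}/N_n)^{-N_n}$, and the remaining radial integral, after the substitution matching the path-loss exponent, evaluates to the Gauss hypergeometric form ${}_2F_1$ seen in (\ref{Laplace}). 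The delicate points will be handling the product over both blockage states and antenna patterns of the interferers, fixing the correct lower limit $r$ dictated by the nearest-BS association, and bookkeeping the finite expansion $\binom{N_n}{u}$ that emerges from the Gamma Laplace factor.

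Finally, I would assemble the conditional coverage probability, multiply by $f_m(r)$ and integrate over $r\in(0,\infty)$. Since this integral admits no elementary closed form, I would approximate it by Gauss--Laguerre quadrature: writing $\int_0^\infty g(r)\,\mathrm{d}r=\int_0^\infty \mathrm{e}^{-r}[\mathrm{e}^{r}g(r)]\,\mathrm{d}r\approx\sum_{i=1}^q w_i\,\mathrm{e}^{r_i}g(r_i)$, where $r_i$ are the zeros of the Laguerre polynomial $L_q$ and $w_i$ the associated weights, exactly mirroring Proposition~\ref{prop1}. Collecting terms then yields the claimed expression (\ref{mmWave-reliability}).
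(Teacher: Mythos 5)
Your proposal matches the paper's own proof essentially step for step: the same coverage-probability reading of (\ref{rel-def}) conditioned on the serving distance $f_m(r)$ and the LOS/NLOS state, the same Alzer approximation of the Gamma CCDF (your ``standard tight bound'' with $\eta_\ell = N_\ell (N_\ell!)^{-1/N_\ell}$ is exactly the Alzer step the paper cites) followed by the binomial expansion over $k$, the same PGFL-plus-Gamma-MGF evaluation of the interference Laplace transform reducing to the ${}_2F_1$ form, and the same Gauss--Laguerre quadrature to discretize the final radial integral. There is no gap; this is the paper's argument.
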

\begin{proof}
Please refer to Appendix \ref{AppendixB}.
\end{proof}

Taking into consideration the LOS or NLOS channel state, the delivery reliability (\ref{mmWave-reliability}) is the CCDF of SINR over the mmWave tier. Then some remarks can be concluded from Proposition \ref{prop2}.

\textit{Remark 1}: The reliability of VR delivery over the mmWave tier (\ref{mmWave-reliability}) indicates that the Laplace transform of the interference $\mathcal{L}(r)$ in (\ref{Laplace}) is independent of the transmit power $P_m$.

\textit{Remark 2}: (\ref{mmWave-reliability}) is a monotonically increasing function with respect to $T_j^{m, t}$, and a monotonically decreasing function with respect to $D_j^{m}$.

\textit{Remark 3}: Considering a special case when the viewpoints are only delivered through LOS links, then (\ref{mmWave-reliability})  is a decreasing and convex function with respect to the blockage density $\kappa$.

For DC \textcolor{blue}{sub-6 GHz} and mmWave network, according to the addition rule for probability, the delivery reliability for viewpoint $j$ is calculated as
\begin{align}
  \mathcal{R}_j^{DC} =  & \mathcal{R}_j^{\mu}(D_j^\mu, T_j^{\mu, t}) + \mathcal{R}_j^{m}(D_j^m, T_j^{m, t}) - \mathcal{R}_j^{\mu}(D_j^\mu, T_j^{\mu, t}) \mathcal{R}_j^{m}(D_j^m, T_j^{m, t}),
\end{align}
Note that $\mathcal{R}_j^{DC}$ is the expression of the reliability when the $j$-th viewpoint is delivered through wireless links. On the other hand, if the MV or SV of the $j$-th viewpoint is cached in the HMD, we assume that the end-to-end delay is less than the delay threshold with probability 1, i.e., the reliability is set to 1. Therefore, the complete expression of the reliability can be written as
\begin{align}
\mathcal{R}_j = \textbf{1}(y_j^{H, M} \| y_j^{H, S} = 1 ) + \textbf{1}(y_j^{H, M} + y_j^{H, S} = 0) \mathcal{R}_j^{DC},
\end{align}
where $\textbf{1}(\cdot)$ is the indicator function.

\subsection{Link Selection Strategy and Problem Formulation }

Since the DC mode is adopted at HMD, it is essential to give the criteria for delivery the viewpoints over the \textcolor{blue}{sub-6 GHz} link or the mmWave link. In this paper, since the end-to-end delay is the key factor affecting the reliability of VR delivery, we propose the minimum-delay delivery criteria in DC mode as the link selection strategy, which can be written as
\begin{align} \label{min-delay-criteria}
   & l_0 = \arg \min_{l \in \{ \mu, m \}} \tau_j^l.
\end{align}

\begin{proposition}\label{prop3}
When the $j$-th viewpoint is not locally cached by the HMD, based on the minimum-delay delivery criteria, the probability that the mmWave link is selected to deliver the viewpoint is given as,
\begin{align}
  & A_j^{m} = \int_0^{\infty} p_{{\mathcal{R}_j^\mu}(D_j^\mu, t)} \mathcal{R}_j^{m}(D_j^m, (t - \tau_0)^{+} )  \mathrm{d} t, \label{mmwave-association-prob}
\end{align}
where $p_{{\mathcal{R}_j^\mu}(D_j^\mu, T_j^{\mu,t})}  = \frac{\partial \mathcal{R}_j^{\mu}(D_j^\mu, T_j^{\mu,t}) } {\partial T_j^{\mu,t}} $, $\tau_0 = \tau_j^{m,c} + \tau_j^{m,b} - \tau_j^{\mu,c} - \tau_j^{\mu,b}$, and $(\cdot)^{+} = \max(\cdot, 0)$.
\end{proposition}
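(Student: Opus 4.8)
The plan is to translate the minimum-delay link-selection rule (\ref{min-delay-criteria}) into an event on the two random transmission delays, and then evaluate its probability by conditioning. Since the sensor sampling delay $\tau_j^s$ and the display refresh delay $\tau_j^d$ are identical constants on both links, the event that the mmWave link is selected, $\{\tau_j^m < \tau_j^\mu\}$, reduces after cancellation to
\begin{equation*}
  \{ \tau_j^{m,t} + \tau_j^{m,c} + \tau_j^{m,b} < \tau_j^{\mu,t} + \tau_j^{\mu,c} + \tau_j^{\mu,b} \}.
\end{equation*}
First I would observe that once a caching and computing strategy is fixed, the computing delays (\ref{tau-c}) and the backhaul delays (\ref{tau-b}) are deterministic, so the quantity $\tau_0 = \tau_j^{m,c} + \tau_j^{m,b} - \tau_j^{\mu,c} - \tau_j^{\mu,b}$ is a constant offset. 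The selection event then becomes $\{\tau_j^{m,t} < \tau_j^{\mu,t} - \tau_0\}$, leaving only the two random transmission delays.

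Next I would compute $A_j^m = \mathbb{P}[\tau_j^{m,t} < \tau_j^{\mu,t} - \tau_0]$ by conditioning on the sub-6 GHz transmission delay. The two transmission delays are independent because they are governed by the independent sub-6 GHz and mmWave PPPs $\Phi_\mu$ and $\Phi_m$ and their respective independent fading processes. Conditioning on $\tau_j^{\mu,t} = t$ and invoking this independence gives
\begin{equation*}
  A_j^m = \int_0^\infty \mathbb{P}[\tau_j^{m,t} < t - \tau_0]\, f_{\tau_j^{\mu,t}}(t)\, \mathrm{d}t.
\end{equation*}
The key identification is that, by the definition in (\ref{rel-def}), $\mathcal{R}_j^l(D_j^l, t) = \mathbb{P}[\tau_j^{l,t} < t]$ is precisely the CDF of the transmission delay over link $l$ evaluated at threshold $t$. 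Hence $\mathcal{R}_j^m(D_j^m, t-\tau_0) = \mathbb{P}[\tau_j^{m,t} < t-\tau_0]$, and its counterpart derivative $p_{\mathcal{R}_j^\mu(D_j^\mu, t)} = \partial \mathcal{R}_j^\mu/\partial T_j^{\mu,t}$ is the density $f_{\tau_j^{\mu,t}}(t)$. Substituting yields the claimed integral, where the truncation $(t-\tau_0)^+$ is needed because the mmWave transmission delay is nonnegative: whenever $t - \tau_0 < 0$ the event $\{\tau_j^{m,t} < t - \tau_0\}$ is empty and contributes zero, which is exactly what $\mathcal{R}_j^m(D_j^m, (t-\tau_0)^+) = \mathcal{R}_j^m(D_j^m, 0) = 0$ enforces.

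I expect the main obstacle to be justifying the independence of $\tau_j^{\mu,t}$ and $\tau_j^{m,t}$ rigorously and handling the truncation cleanly, rather than the integral manipulation itself. Some care is needed to argue that $\mathcal{R}_j^\mu(D_j^\mu, \cdot)$ is a genuine CDF (nondecreasing in its threshold argument, with a well-defined density), so that differentiating it produces a legitimate probability density to integrate against; the closed forms established in Propositions \ref{prop1} and \ref{prop2} make this smoothness explicit. Once the independence and the CDF/PDF identification are in place, the result follows directly from the law of total probability.
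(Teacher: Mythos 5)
Your proposal is correct and follows essentially the same route as the paper's proof: reduce the selection event $\{\tau_j^m < \tau_j^\mu\}$ to $\{\tau_j^{m,t} < \tau_j^{\mu,t} - \tau_0\}$ by cancelling the deterministic delay terms, then condition on the sub-6 GHz transmission delay and identify $\mathcal{R}_j^m(D_j^m,\cdot)$ and $\partial \mathcal{R}_j^\mu / \partial T_j^{\mu,t}$ as the CDF and PDF of the respective transmission delays via Propositions \ref{prop1} and \ref{prop2}. Your treatment is in fact slightly more careful than the paper's, since you make explicit the independence of the two transmission delays and the role of the truncation $(t-\tau_0)^+$, both of which the paper leaves implicit.
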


\begin{proof}
According to (\ref{min-delay-criteria}), the delivery probability over mmWave links can be expressed as
\begin{align}\label{max-rate-proof}
  A_j^{m} = & \ \mathbb{P} [\tau_j^{m} < \tau_j^{\mu} ]  =  \mathbb{P} [\tau_j^{m,t} < \tau_j^{\mu,t} - \tau_0],
\end{align}
where $\tau_0 = \tau_j^{m,c} + \tau_j^{m,b} - \tau_j^{\mu,c} - \tau_j^{\mu,b}$. Using the results in Propositions 1 and 2, the cumulative distribution function of $\tau_j^{m,t}$ is directly obtained in (\ref{mmWave-reliability}), and denote the probability density function of (\ref{muwave-reliability}) with respect to $T_j^{\mu,t}$ as $p_{{\mathcal{R}_j^\mu}(D_j^\mu, T_j^{\mu,t})}  \triangleq \frac{\partial \mathcal{R}_j^{\mu}(D_j^\mu, T_j^{\mu,t}) } {\partial T_j^{\mu,t}} $, we have
\begin{align}
   A_j^{m} = \int_0^{\infty} p_{{\mathcal{R}_j^\mu}(D_j^\mu, T_j^{\mu,t})} \mathcal{R}_j^{m}(D_j^m, (T_j^{\mu,t} - \tau_0)^{+} )  \mathrm{d} T_j^{\mu,t}.
\end{align}
\end{proof}
Accordingly, the probability that the mmWave link is selected to deliver the viewpoint is given as $A_j^m = 1 - A_j^\mu$. Proposition \ref{prop3} can be used to calculate the expected amount of transmitted data over the \textcolor{blue}{sub-6 GHz} link and the mmWave link. Thus, the expected number of MVs computed at the $\mu$BS, mBS, and the HMD can be obtained, respectively.

Note that $\tau_0$ represents the difference of the delays (except for the transmission delays) between mmWave links and \textcolor{blue}{sub-6 GHz} links, where $\tau_j^{l,c}, \tau_j^{l,b}, l\in\{ \mu, m \}$ can be calculated by (\ref{tau-c}) and (\ref{tau-b}), respectively.
From Remark 2, we can deduce that $A_j^{m}$ is decreasing with $\tau_0$, which indicates that the caching and computing strategy will have an impact on the delivery probability over mmWave and \textcolor{blue}{sub-6 GHz} links. For example, if the blockage density in the mmWave tier is low, we can adjust the caching and computing strategy to get a smaller value of $\tau_0$, thus increasing the delivery probability over mmWave links to improve the reliability of VR delivery.

\textit{Remark 4}: From Remark 3 and the delivery probability over mmWave links (\ref{mmwave-association-prob}), we conclude that when the viewpoints are only delivered through LOS links, $A_j^m$ is a decreasing and convex function with respect to the blockage density $\kappa$. In addition, when $\kappa \rightarrow \infty $, we have $A_j^m \rightarrow 0$, and when $\kappa = 0$, $A_j^m < 1$. This indicates that mmWave links cannot be utilized  when the blockage density is too high, while \textcolor{blue}{sub-6 GHz} links can be utilized even if there is no blockage for mmWave links. In other words, \textcolor{blue}{sub-6 GHz} links can be utilized to enhanced the reliability of VR delivery despite the large mmWave bandwidth.

Next, we can formulate the problem of maximizing the reliability of VR delivery in DC \textcolor{blue}{sub-6 GHz} and mmWave  HetNets by jointly optimizing the caching and computing decision at the $\mu$BS, mBS, and HMD as follow,
\vspace{-8mm}

\begin{small}
\begin{subequations}
\begin{align}
\textbf{P1: } \max_{\{y_j^{q, M}, y_j^{q, S}, z_j^q \}}    &\quad  \sum_{j=1}^{J}  p_j \mathcal{R}_{j}   \label{obj-func} \\
\text{s.t.}    &\quad \sum_{j=1}^{J} d_j^{M} y_j^{q,M} + d_j^S y_j^{q,S} \leq C^{q}, q \in \{ H, \mu, m \} , \label{constraint:cachesize-constraint-original}  \\
&\quad \sum_{j=1}^{J} p_j \varepsilon k_H d_j^M  (y_j^{H,M} + A_j^{\mu, M} y_j^{\mu,M} + A_j^{m,M} y_j^{m,M}) z_j^H \leq E^H  \label{constraint:energy-constraint1} \\
&\quad \sum_{j=1}^{J} p_j \varepsilon A_j^l k_l d_j^M  z_j^l \leq E^l,  l \in \{ \mu, m \}, \label{constraint:energy-constraint2} \\
&\quad  \sum_{j=1}^{J} \sum_{l} A_j^{l} d_j^S (1 - y_j^{H,M} \| y_j^{H,S} \| y_j^{l, M} \| y_j^{l, S} )   \leq B^{b}, j \in \mathcal{J},  l \in \{ \mu, m \}, \label{constraint:backhaul-constraint1}  \\
&\quad  y_j^{q, M} \in \{ 0, 1 \}, y_j^{q, S} \in \{ 0, 1 \}, z_j^{q} \in \{ 0, 1 \},  j \in \mathcal{J} , q \in \{ H, \mu, m \}  \label{constraint:binary-constraint1} ,
\end{align}
\end{subequations}
\end{small}
$\!\!$where (\ref{obj-func}) is the reliability of VR delivery according to (\ref{reliability-def}). (\ref{constraint:cachesize-constraint-original}) is the cache capacity constraint. (\ref{constraint:energy-constraint1}) is the average computing energy consumption at the HMD. (\ref{constraint:energy-constraint2}) is the average computing energy consumption at the $\mu$BS or mBS. (\ref{constraint:backhaul-constraint1}) is the backhaul bandwidth constraint, and (\ref{constraint:binary-constraint1}) ensures the binary decision of the caching and computing decision.
\end{spacing}
\textcolor{blue}{
\textit{Remark 5}: Note that the coupling 3C resources are jointly considered in MEC-enabled networks for performance improvement, though the communication part is not added as an optimization variable in Problem \textbf{P1}. Specifically, the communication strategy $A_j^l$ is affected by the caching and computing strategies reflected by $\tau_0$ as shown in Eq. (\ref{mmwave-association-prob}), while the caching and computing strategies in the formulated Problem \textbf{P1} are affected by the communication strategy $A_j^l$ involved in constraints (\ref{constraint:energy-constraint1})--(\ref{constraint:backhaul-constraint1}).
From the perspective of practical implementation, the caching placement phase is prior to the viewpoint delivery phase (e.g., caching placement is performed during off-peak time), so it is impractical to perform a joint communication, caching and computing optimization. Instead, the  caching and computing strategies are optimized based on the probabilistic communication decision derived in Proposition 3, then the caching and computing decisions can be determined by solving Problem \textbf{P1}, which is a practical solution for 3C resource optimization.
}

\begin{spacing}{1.45}
\subsection{Problem Reformulation}
Problem \textbf{P1} is difficult to solve because the objective function (\ref{obj-func}) and the constraint (\ref{constraint:backhaul-constraint1}) contain logical operations. Fortunately, we can explore the coupling relationship between the caching and computing strategies to greatly reduce the solution space.

\begin{lemma}\label{lemma1}
For any viewpoint $j \in \mathcal{J}$, the following caching and computing relationships hold,
\begin{enumerate}
  \item $(y_j^{H, M} \| y_j^{H,S}) + (y_j^{\mu, M} \| y_j^{\mu, S} \| y_j^{m, M} \| y_j^{m, S}) \leq 1$,
  \item $y_j^{q, M} + y_j^{q, S} \leq 1$, $q \in \{ \mu, m, H \}$,
  \item if $y_j^{l,M} = 1$, then $z_j^l + z_j^H = 1$ holds, $l \in \{ \mu, m \}$.
\end{enumerate}
 \end{lemma}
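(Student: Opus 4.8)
The plan is to establish all three relationships as structural properties of an \emph{optimal} solution to Problem~\textbf{P1}, using exchange (local-swap) arguments. For each relationship I would assume, towards a contradiction, that an optimal caching-and-computing configuration $\{y_j^{q,M}, y_j^{q,S}, z_j^q\}$ violates it for some viewpoint $j$, and then construct a modified configuration that (i) remains feasible with respect to the cache, energy, and backhaul constraints (\ref{constraint:cachesize-constraint-original})--(\ref{constraint:backhaul-constraint1}), (ii) consumes no more of any resource, and (iii) attains an objective value (\ref{obj-func}) at least as large. Since the per-viewpoint term $\mathcal{R}_j$ is the only quantity affected by a per-viewpoint swap, it suffices to check that $\mathcal{R}_j$ is non-decreasing under the swap while some resource is freed; this shows an optimal solution satisfying the relationship always exists, so the relationship may be imposed as a constraint to shrink the feasible set.

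For relationship~1 I would invoke the definition of $\mathcal{R}_j$: whenever the MV or SV is cached at the HMD, the indicator $\textbf{1}(y_j^{H,M}\|y_j^{H,S}=1)$ forces $\mathcal{R}_j=1$, the maximal attainable value, which is \emph{independent} of every BS-side decision. Hence if a viewpoint is cached at the HMD, any copy simultaneously cached at a $\mu$BS or mBS contributes nothing to the objective while occupying BS cache capacity; setting all BS caching variables for that viewpoint to zero keeps $\mathcal{R}_j=1$ and only relaxes (\ref{constraint:cachesize-constraint-original}), giving the claimed mutual exclusivity. For relationship~3 the argument combines necessity with non-redundancy. If the MV is cached at link $l$ ($y_j^{l,M}=1$) and the viewpoint is actually delivered, the stereoscopic view must be produced somewhere along the path, so at least one projection decision must be active, i.e. $z_j^l+z_j^H\ge 1$; otherwise no SV is ever rendered and the delivery is meaningless. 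Conversely, projecting at both the BS and the HMD renders the viewpoint twice: the second projection changes neither the transmitted size $D_j^l$ nor the threshold $T_j^{l,t}$, hence leaves $\mathcal{R}_j$ unchanged, while strictly increasing the computing energy in (\ref{constraint:energy-constraint1})--(\ref{constraint:energy-constraint2}), so $z_j^l+z_j^H\le 1$. Combining the two bounds yields $z_j^l+z_j^H=1$.

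Relationship~2 is the main obstacle, because here the redundant caching interacts non-trivially with the computing decision and the transmitted size $D_j^l$. The plan is to enumerate the delivery modes available once both $y_j^{q,M}=y_j^{q,S}=1$: either the cached SV is sent directly (size $d_j^S$, no projection), or the cached MV is sent and projected at the HMD (size $d_j^M$, projection at the HMD). For the computing decision fixed by the optimal solution, exactly one of these modes realizes the value of $D_j^l$ entering (\ref{rel-def}), so only one of the two cached copies is ever consulted; deleting the unused copy leaves $D_j^l$, $T_j^{l,t}$, and therefore $\mathcal{R}_j$ unchanged by the monotonicity noted after Proposition~\ref{prop1} and in Remark~2, while freeing the cache capacity occupied by that copy in (\ref{constraint:cachesize-constraint-original}) and leaving the energy and backhaul usage untouched.

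The delicate point I expect to spend the most care on is verifying that the data-size expression for $D_j^l$ and the computing delay (\ref{tau-c}) remain consistent after the deletion for \emph{every} combination of $z_j^l$ and $z_j^H$, so that the realized end-to-end delay — and hence the reliability — is provably not degraded in any case. Once this finite case check is discharged, the exchange argument for relationship~2 closes exactly as for the other two, and the three relationships together certify that the optimum of Problem~\textbf{P1} can be sought within the reduced configuration space, which is what the subsequent reformulation into an MMKP will exploit.
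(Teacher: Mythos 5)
Your proposal is correct and follows essentially the same route as the paper: the paper's own proof consists of exactly the same redundancy and necessity observations (HMD caching forces $\mathcal{R}_j=1$ so BS copies are wasted cache; caching both MV and SV at one device means one copy is never used; an MV cached at a BS must be projected somewhere, and only once), stated informally rather than as explicit exchange arguments. Your version merely formalizes this as a without-loss-of-optimality swap argument on Problem \textbf{P1}, which is a faithful (and somewhat more careful) rendering of the paper's reasoning rather than a different approach.
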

\begin{proof}
1) means that the HMD and the MEC server (at the $\mu$BS or mBS) should not cache the $j$-th viewpoint simultaneously. This is because when the $j$-th viewpoint is cached at the HMD, the reliability is set to 1, thus there is no need to cache it at the MEC server.

2) means that there is no need to cache the MV and the SV of the $j$-th viewpoint simultaneously at the HMD or the MEC server. This is because the caching and the computing strategy is determined before delivering the $j$-th viewpoint, and caching MV or SV is two different strategies, which should not exist simultaneously.

3) holds because when the MV of the $j$-th viewpoint is cached at the $\mu$BS or mBS, it should be computed either at the MEC server or at the HMD. On the other hand, the computing strategy over the \textcolor{blue}{sub-6 GHz} link can be different from that over the mmWave link.
\end{proof}

According to Lemma 1, the caching and computing strategies are coupled, which can be used to reduce the solution space to 18 joint caching and computing strategies for the $j$-th viewpoint, which are  listed in Table \ref{joint-decision}. In Table \ref{joint-decision}, we rewrite the joint caching and computing strategies of the $j$-th viewpoint
in the form of 9-tuple as $\left[ y_j^{H, M} \ y_j^{H, S} \ z_j^H, y_j^{\mu, M} \ y^{\mu, S} \ z^{\mu}, y^{m, M} \  y^{m, S} \  z^{m}   \right]$. To distinguish whether the MVs computed at the HMD is cached locally, or cached at $\mu$BSs or mBSs, we further rewrite $z_j^H$ as $z_j^H \triangleq (z_j^{H(H)} z_j^{H(\mu)} z_j^{H(m)})$.

The 18 joint caching and computing strategies can be divided into 4 types based on different caching strategies, i.e., local caching, caching at $\mu$BSs and mBSs simultaneously, caching at $\mu$BSs or mBSs, and backhaul retrieving. In each type, different computing strategies are included. Detailed descriptions are as follows,
\begin{itemize}
  \item \textbf{Local caching (Strategy 1, 2):} In this type, the viewpoint $j$ is cached locally at the HMD, i.e. $y_j^{H, M} = 1$, or $y_j^{H, S} = 1$. Since over-the-air transmission and backhaul retrieving are not required in this case, the viewpoint $j$ can be guaranteed to be obtained in time, so the delivery reliability is set to 1. When $y_j^{H, M} = 1$, the $j$-th MV is cached locally, at the cost of $d_j^{M}$ cache size and $k_H d_j^{M} \varepsilon$ computing energy consumption. When $y_j^{H, S} = 1$, the $j$-th SV is directly obtained locally, at the cost of $d_j^S$ cache size.

  \item \textbf{Caching at $\mu$BSs and mBSs simultaneously:} In this type, depending on whether the MV or SV of the $j$-th viewpoint is cached at $\mu$BSs and mBSs, 4 cases can be generated,
      \begin{itemize}
        \item \textbf{(Strategy  3 -- 6)} The $j$-th MV is cached at both $\mu$BSs and mBSs. The delivery reliability over the \textcolor{blue}{sub-6 GHz} link is $\mathcal{R}_j^{\mu}(d_j^S, T_j - \frac{\varepsilon d_j^{M}}{f_{\mu}} )$ when the MV is projected into SV at the $\mu$BS, and $\mathcal{R}_j^{\mu}(d_j^{M}, T_j - \frac{\varepsilon d_j^{M}}{f_{H}} )$ when the MV  is projected into SV at the HMD. Likewise, the delivery reliability over the mmWave link is $\mathcal{R}_j^{m}(d_j^S, T_j - \frac{\varepsilon d_j^{M}}{f_{m}} )$ when the MV is projected into  SV  at the mBS, and $\mathcal{R}_j^{m}(d_j^{M}, T_j - \frac{\varepsilon d_j^{M}}{f_{H}} )$ when the MV is projected into SV
            at the HMD. The computing energy consumption is $A_{q_1} k_{q_2} d_j^{M} \varepsilon, q_1 \in \{ \mu, m \}, q_2 \in \{ \mu, m, H \}$ at the $\mu$BS, mBS or HMD for strategy 3, 4, 5, and $k_{v} d_j^{M} \varepsilon$ at the HMD for strategy 6.

        \item  \textbf{(Strategy  7, 8)} The MV is cached at $\mu$BSs, and the SV
            is cached at mBSs. The delivery reliability over the \textcolor{blue}{sub-6 GHz} link depends on the computing strategy similar to the case in strategy 3 -- 6. The delivery reliability over the mmWave link is $\mathcal{R}_j^{m}(d_j^S, T_j )$. The computing energy consumption is $A_{\mu} k_{q} d_j^{M} \varepsilon, q \in \{ \mu, v \}$ at the $\mu$BS or HMD.

        \item \textbf{(Strategy  9, 10)} The SV is cached at $\mu$BSs, and the MV is cached at mBSs. The delivery reliability over the \textcolor{blue}{sub-6 GHz} link is $\mathcal{R}_j^{\mu}(d_j^S, T_j )$. The delivery reliability over the mmWave link depends on the computing strategy similar to the case in strategy 3 -- 6. The computing energy consumption is $A_{m} k_{q} d_j^{M} \varepsilon, q \in \{ m, v \}$ at the mBS or HMD.

        \item \textbf{(Strategy  11)} The SV is cached at both $\mu$BSs and mBSs. The delivery reliability over the \textcolor{blue}{sub-6 GHz} link is $\mathcal{R}_j^{\mu}(d_j^S, T_j )$, and that over the mmWave link is $\mathcal{R}_j^{m}(d_j^S, T_j )$. The cost of cache size is $d_j^S$ at both $\mu$BSs and mBSs, without computing energy consumption.
      \end{itemize}

  \item \textbf{Caching at $\mu$BSs or mBSs:} In this type, depending on whether the MV or SV is cached at $\mu$BSs or mBSs, 4 cases can be generated,
      \begin{itemize}
        \item \textbf{(Strategy  12, 13)} The MV is cached at $\mu$BSs. The delivery reliability over the \textcolor{blue}{sub-6 GHz} link depends on the computing strategy similar to the case in strategy 3 -- 6.  Considering the backhaul retrieve delay, the delivery reliability over the mmWave link is $\mathcal{R}_j^{m}(d_j^S, T_j - \tau_j^r )$. The computing energy consumption is $A_{\mu} k_{q} d_j^{M} \varepsilon, q \in \{ \mu, v \}$ at the $\mu$BS or HMD, and the backhaul cost is $A_m d_j^S$.

        \item \textbf{(Strategy  14)} The SV is cached at $\mu$BSs. The delivery reliability over the \textcolor{blue}{sub-6 GHz} link is $\mathcal{R}_j^{\mu}(d_j^S, T_j )$, and that over the mmWave link is $\mathcal{R}_j^{m}(d_j^S, \tau_j^r )$. No computing delay or computing energy consumption is generated, and the backhaul cost is $A_m d_j^S$.

        \item \textbf{(Strategy  15, 16)} The MV is cached at mBSs. Considering the backhaul delay, the delivery reliability over the \textcolor{blue}{sub-6 GHz} link is $\mathcal{R}_j^{\mu}(d_j^S, T_j - \tau_j^r )$. The delivery reliability over the mmWave link depends on the computing strategy similar to the case in strategy 3 -- 6. The computing energy consumption is $A_{m} k_{q} d_j^{M} \varepsilon, q \in \{ m, v \}$ at the mBS or HMD, and the backhaul cost is $A_{\mu} d_j^S$.

        \item \textbf{(Strategy  17)} The SV is cached at mBSs. The delivery reliability over the \textcolor{blue}{sub-6 GHz} link is $\mathcal{R}_j^{\mu}(d_j^S, T_j - \tau_j^r )$, and that over the mmWave link is $\mathcal{R}_j^{m}(d_j^S, T_j )$. No computing delay or computing energy consumption is generated, and the backhaul cost is $A_{\mu} d_j^S$.
      \end{itemize}
  \item \textbf{Backhaul Retrieve (Strategy  18):} The delivery reliability over the \textcolor{blue}{sub-6 GHz} link is $\mathcal{R}_j^{\mu}(d_j^S, T_j - \tau_j^r )$, and that over the mmWave link is $\mathcal{R}_j^{m}(d_j^S, T_j - \tau_j^r )$. The viewpoint $j$ is retrieved via the backhaul at the cost of $d_j^S$.
\end{itemize}

\begin{table*}
\footnotesize
  \centering
  \caption{Joint Caching and Computing Strategy in DC sub-6 GHz and mmWave HetNets}
  \resizebox{\textwidth}{4cm}
  {
  \label{joint-decision}
  \renewcommand\arraystretch{1.7} 
  \begin{tabular}{|c|c|c|c|c|c|c|c|c|c|c|c|c|c|}
    \hline
    \multirow{2}*{\textbf{Type}} &  \multirow{2}*{\shortstack{\textbf{Strategy}  \\  \textbf{Index} $k$}}    & \multirow{2}*{\textbf{Joint Decision }}        & \multicolumn{2}{|c|}{\textbf{sub-6 GHz Link}}  & \multicolumn{2}{|c|}{\textbf{mmWave Link}}  &  \multicolumn{3}{|c|}{\textbf{Caching Occupancy $\xi_{jk}^q$}} & \multicolumn{3}{|c|}{\textbf{Computing Energy Consumption $\zeta_{jk}^q$}} & \multirow{2}*{\shortstack{\textbf{Backhaul} \\ \textbf{Cost $\varphi_{jk}$}}}  \\
    \cline{4-13}
    ~  &  ~  &  ~  & $D_j^\mu$ & $\tau_j^{\mu, c} + \tau_j^{\mu,b}$  &  $D_j^m$  & $\tau_j^{m, c} + \tau_j^{m,b}$    &  $\mu$BS  &  mBS  &  HMD  &  $\mu$BS  &  mBS  &  HMD  & ~ \\
    \hline
    \multirow{2}*{\textbf{Local Caching}} & 1  &  [10(100),000,000]  &  --  & -- & -- & -- &   0  &  0  &  $d_j^{M}$  &  0 & 0  &  $k_{H} d_j^{M} \varepsilon$  &  0\\
    \cline{2-14}
    ~ & 2 &  [01(000),000,000]  & -- & -- & -- & --  &  0  &  0  &  $d_j^S$  &  0  &  0  &  0  &  0 \\
   \hline
   \multirow{9}*{\shortstack{  \\ \\ \\ \\ \\ \\ \\ \textbf{Caching at} \\ \textbf{$\mu$BSs and mBSs} \\ \textbf{simultaneously}}}  & 3  &  [00(000),101,101]  & $d_j^S$ &  $\frac{\varepsilon d_j^M}{f_{\mu}}$  &  $d_j^{S}$  &  $\frac{\varepsilon d_j^M}{f_{m}}$  &  $d_j^{M}$   &  $d_j^{M}$  &  0  &  $A_{\mu} k_{\mu} d_j^M \varepsilon$  &  $A_{m} k_{m} d_j^M \varepsilon$  &  0  &  0 \\
    \cline{2-14}
    ~  &4&  [00(001),101,100]  & $d_j^S$  & $\frac{\varepsilon d_j^{M}}{f_{\mu}}$  & $d_j^M$   &  $\frac{\varepsilon d_j^{M}}{f_{H}}$  &  $d_j^{M}$   &  $d_j^{M}$  &  0  &  $A_{\mu} k_{\mu} d_j^{M} \varepsilon$  &  0  &  $A_{m} k_{H} d_j^{M} \varepsilon$  &  0\\
    \cline{2-14}
    ~  &5&  [00(010),100,101]  & $d_j^{M}$  &  $\frac{\varepsilon d_j^{M}}{f_{H}}$   &   $d_j^S$ &  $\frac{\varepsilon d_j^{M}}{f_{m}}$  &  $d_j^{M}$   &  $d_j^{M}$  &  0  &  0  &  $A_{m} k_{m} d_j^{M} \varepsilon$   &  $A_{\mu} k_{H} d_j^{M} \varepsilon$  &  0\\
    \cline{2-14}
    ~  &6&  [00(011),100,100]  & $d_j^{M}$   &  $\frac{\varepsilon d_j^{M}}{f_{H}}$  &  $d_j^{M}$   &   $\frac{\varepsilon d_j^{M}}{f_{H}}$   &  $d_j^{M}$   &  $d_j^{M}$  &  0  &  0  &  0  &  $ k_{H} d_j^{M} \varepsilon$  &  0\\
    \cline{2-14}
    ~  &7&  [00(000),101,010]  &  $d_j^S$ & $\frac{\varepsilon d_j^{M}}{f_{\mu}}$   &  $d_j^S$   & 0  &  $d_j^{M}$   &  $d_j^S$  &  0  &  $ A_{\mu} k_{\mu} d_j^{M} \varepsilon$  &  0  &  0  &  0\\
    \cline{2-14}
    ~  &8&  [00(010),100,010]  &  $d_j^{M}$  & $\frac{\varepsilon d_j^{M}}{f_{H}}$  & $d_j^S$  & 0  &  $d_j^{M}$   &  $d_j^S$  &  0  &  0  &  0  &  $ A_{\mu} k_{H} d_j^{M} \varepsilon$  &  0\\
    \cline{2-14}
    ~  &9&  [00(000),010,101]  & $d_j^S$ &  0  & $d_j^S$  &  $\frac{\varepsilon d_j^{M}}{f_{m}}$   & $d_j^{S}$  &  $d_j^{M}$  &  0  &  0  &  $ A_{m} k_{m} d_j^{M} \varepsilon$&  0  &  0\\
    \cline{2-14}
    ~  &10&  [00(001),010,100]  & $d_j^S$  &  0   &  $d_j^{M}$  &  $\frac{\varepsilon d_j^{M}}{f_{H}}$  &  $d_j^S$   &  $d_j^{M}$  &  0  &  0  &  0  &  $ A_{m} k_{H} d_j^{M} \varepsilon$  &  0\\
    \cline{2-14}
    ~  &11&  [00(000),010,010]  & $d_j^S$ & 0 & $d_j^S$  & 0  &  $d_j^S$   &  $d_j^S$  &  0  &  0  &  0  & 0  &  0\\
   \hline
   \multirow{6}*{\shortstack{ \\ \\ \\ \\ \\ \\ \textbf{Caching at}\\ \textbf{$\mu$BSs or mBSs}}}  &12&  [00(000),101,000]  & $d_j^S$  & $\frac{\varepsilon d_j^{M}}{f_{\mu}}$  & $d_j^S$  &  $\tau_j^r$   &  $d_j^{M}$  &  0  &  0  & $A_{\mu} k_{\mu}  d_j^{M} \varepsilon$  &  0  &  0  &  $A_{m} d_j^S$\\
    \cline{2-14}
    ~  &13&  [00(010),100,000]  & $d_j^{M}$  & $\frac{\varepsilon d_j^{M}}{f_{H}}$  & $d_j^S$  &   $\tau_j^r$   &  $d_j^{M}$  &  0  &  0  &  0  &  0  &  $A_{\mu} k_{H}  d_j^{M} \varepsilon$  &  $A_{m} d_j^S$ \\
    \cline{2-14}
    ~  &14&  [00(000),010,000]  &  $d_j^S$  &  0  &  $d_j^S$  & $\tau_j^r$  &  $d_j^S$  &  0  &  0  &  0  &  0  &  0  &  $A_{m} d_j^S$   \\
    \cline{2-14}
    ~  &15&  [00(001),000,100]  & $d_j^S$ &  $\tau_j^r$  & $d_j^{M}$  & $\frac{\varepsilon d_j^{M}}{f_{H}}$  &  0   &  $d_j^{M}$  &  0  &  0  &  0  &  $A_{m} k_{H}  d_j^{M} \varepsilon$  &  $A_{\mu} d_j^S$\\
    \cline{2-14}
    ~  &16&  [00(000),000,101]  & $d_j^S$  &  $\tau_j^r$  &  $d_j^S$  & $\frac{\varepsilon d_j^{M}}{f_{m}}$  &  0   &  $d_j^{M}$  &  0  &  0  &  $A_{m} k_{m} d_j^{M} \varepsilon$  &  0  &  $A_{\mu} d_j^S$\\
    \cline{2-14}
    ~  &17&  [00(000),000,010]  & $d_j^S$  & $\tau_j^r$  &  $d_j^S$  & 0  &  0   &  $d_j^S$  &  0  &  0  &  0  &  0  &  $A_{\mu} d_j^S$\\
    \hline
   \textbf{\mbox{Backhaul Retrieving}}  &18&  [00(000),000,000]  & $d_j^S$ & $\tau_j^r$  &  $d_j^S$ & $\tau_j^r$  &  0   &  0  &  0  &  0  &  0  & 0  &  $d_j^S$\\
   \hline
  \end{tabular}}
\end{table*}
Denote $k$ as the strategy index of the 18 strategies listed in Table \ref{joint-decision}, and $x_{jk} \in \{ 0,1 \}$ as the binary decision variable for the strategy of the $j$-th viewpoint.  For the $k$-th strategy, the corresponding caching occupancy, computing energy consumption and the backhaul cost is denoted as $\xi_{jk}^q$, $\zeta_{jk}^q, q \in \{ \mu, m, H \}$, and $\varphi_{jk}$, respectively.
Then problem \textbf{P1} can be reformulated as a multiple-choice multi-dimension knapsack problem (MMKP) as follows,
\vspace{-5mm}

\begin{subequations} \small
\begin{align}
\textbf{P2: } \max_{\{x_{jk}\}, j \in \mathcal{J}, k \in \{1,2, \cdots, 18  \}}    &\quad  \sum_{j=1}^{J} \sum_{k=1}^{18} p_j \mathcal{R}_{jk} x_{jk}  \\
\text{s.t.}    &\quad \sum_{j=1}^{J} \sum_{k=1}^{18} \xi_{jk}^{q} x_{jk} \leq C^{q}, q \in \{ H, \mu, m \} , \label{constraint:cachesize-constraint}  \\
&\quad  \sum_{j=1}^{J} \sum_{k=1}^{18} p_j \zeta_{jk}^{q} x_{jk} \leq E^{q}, q \in \{ H, \mu, m \} ,  \label{constraint:energy-constraint}  \\
&\quad  \sum_{j=1}^{J} \sum_{k=1}^{18} \varphi_{jk} x_{jk} \leq B^{b},  \label{constraint:backhaul-constraint}  \\
&\quad  \sum_{k=1}^{18} x_{jk} = 1, j \in \mathcal{J}, \label{constraint:SumOne-constraint} \\
&\quad  x_{jk} \in \{ 0, 1 \} , j \in \mathcal{J} , k \in \{1,2,\cdots, 18\}  \label{constraint:binary-constraint} .
\end{align}
\end{subequations}
Problem \textbf{P2} is a 18-choice 7-dimensional MMKP problem, which is proved to be NP-hard. Specifically, $J$ viewpoints are considered as $J$ classes, while the 18 joint caching and computing strategies belonging to each class $j$ are considered as 18 items. The problem is to choose one item from each class such that the profit sum is maximized while satisfying the capacity and energy consumption constraints.
\end{spacing}

\begin{spacing}{1.38}

\section{Optimization of joint caching and computing  }

\subsection{Optimal solution using BFBB algorithm}
We propose a best-first branch and bound (BFBB) algorithm to solve the MMKP problem \textbf{P2}. The key idea of the BFBB algorithm  can be described as follows:  1) Compute upper bounds and lower bounds of the optimal solution and discard branches that cannot produce a better solution than the best one found so far by the algorithm; 2) Utilize a priority queue to select the node with the highest priority as the expanded node of the search tree.

\textit{1) Computation of upper and lower bound:}
First, to compute the upper bound of problem \textbf{P2}, we construct an auxiliary problem \textbf{P3}, which relaxes the constraints into the sum of constraints (\ref{constraint:cachesize-constraint})--(\ref{constraint:backhaul-constraint}) as follows,
\begin{subequations}
\begin{align}
\textbf{P3: } \max_{\{x_{jk}\}}    &\quad  \sum_{j=1}^{J} \sum_{k \in \mathcal{K}} p_j \mathcal{R}_{jk} x_{jk}    \\
\text{s.t.}    &\quad  \sum_{j=1}^{J} \sum_{k \in \mathcal{K}} \varpi_{jk} x_{jk} \leq C_0 , \label{constraint:cachesize-constraint-relaxed}  \\
 &\quad \text{(\ref{constraint:SumOne-constraint}), (\ref{constraint:binary-constraint})},  \nonumber
\end{align}
\end{subequations}
where $\varpi_{jk} =  \left( \varphi_{jk} + \sum_{q \in \{ H, \mu, m \} } (\xi_{jk}^{q} + p_j \zeta_{jk}^{q}) \right)$ and  $ C_0 =  B^b + \sum_{q \in \{ H, \mu, m \} } \left( C^{q} + E^{q} \right) $.

For each viewpoint $j$, select the joint caching and computing decision $k$ which maximizes $\frac{\mathcal{R}_{jk}} {\varpi_{jk}}$, and denote the corresponding decision as $k_{\mathrm{max}}$ for each viewpoint $j$.
Let $\mathcal{R}_{\mathrm{UB}}$ denote the upper bound of problem \textbf{P3}, and let $\varpi_0 = \sum_{j \in \mathcal{J}} \varpi_{jk_{\mathrm{max}}}$, $\mathcal{R}_0 = \sum_{j \in \mathcal{J}} \mathcal{R}_{jk_{\mathrm{max}}}$, then there exist the following two cases,
\begin{itemize}
  \item $\varpi_0 > C_0$. In this case, the solution of problem $\textbf{P3}$ is upper bounded by
      \begin{equation}\label{UB-expression}
        \mathcal{R}_{\mathrm{UB}} =  \sum_{j \in \mathcal{J}} \mathcal{R}_{jk_{\mathrm{max}}} \times  \left( \frac{C_0} {\sum_{j \in \mathcal{J}} \varpi_{jk_{\mathrm{max}}} } \right).
      \end{equation}
  \item $\varpi_0 < C_0$. In this case, the constraint (\ref{constraint:cachesize-constraint-relaxed}) is not violated when decision $x_{jk_{\mathrm{max}}}$ is selected for all $ j \in \mathcal{J}$. The remaining items are merged into the same class $L$, with items indexed by $\ell = 1,\cdots, N_f$. The remaining items are sorted in descending order of $\frac{\mathcal{R}_{\ell}} {\varpi_{\ell}}$. Then, the problem is converted into choosing the remaining items with the remaining capacity constraint equals to $C_0 - \varpi_0$. Adopting greedy algorithm, we select the items in descending order of $\frac{\mathcal{R}_{\ell}} {\varpi_{\ell}}$ to fill the remaining capacity until the capacity constraint is violated. In particular, denote $ \hat{\ell} \in [ 1, \ell ] $ the index of the item that violates the remaining capacity constraint $C_0 - \varpi_0$, which can be defined as,
      \begin{equation}
        \hat{\ell} = \mathrm{min} \left\{ \hat{\ell} : \sum_{\ell = 1}^{\hat{\ell} - 1 } \varpi_{\ell} \leq C_0 - \varpi_0 < \sum_{\ell = 1}^{\hat{\ell} } \varpi_{\ell} \right\}.
      \end{equation}
    Then the solution of problem $\textbf{P3}$ is upper bounded by
      \begin{equation}
        \mathcal{R}_{\mathrm{UB}} = \mathcal{R}_0 + \sum_{\ell = 1}^{\hat{\ell} - 1} \mathcal{R}_{\ell} + \left( \frac{(C_0 - \varpi_0 ) - \sum_{\ell = 1}^{\hat{\ell} - 1} \varpi_{\ell} }  {\varpi_{q}} \right) \times \mathcal{R}_{\hat{\ell}}.
      \end{equation}

\end{itemize}

\begin{lemma}\label{lemma4}
$\mathcal{R}_{\mathrm{UB}}$ is an upper bound for the auxiliary problem $\textbf{P3}$ as well as the problem $\textbf{P2}$.
\end{lemma}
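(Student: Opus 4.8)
The plan is to establish the chain $\mathrm{opt}(\textbf{P2})\le \mathrm{opt}(\textbf{P3})\le \mathcal{R}_{\mathrm{UB}}$, so that $\mathcal{R}_{\mathrm{UB}}$ dominates the optima of both problems at once. The first inequality rests on a constraint-aggregation argument proving that the feasible region of \textbf{P2} is contained in that of \textbf{P3}; the second rests on recognizing $\mathcal{R}_{\mathrm{UB}}$ as the value of a fractional (LP) relaxation of the single-constraint problem \textbf{P3}, solved greedily. Since the two inequalities are logically independent, I would treat them separately and then concatenate.

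For the containment step, I would take an arbitrary $\{x_{jk}\}$ feasible for \textbf{P2} and add together the seven scalar constraints (\ref{constraint:cachesize-constraint})--(\ref{constraint:backhaul-constraint}): the three cache constraints over $q\in\{H,\mu,m\}$, the three energy constraints over $q$, and the single backhaul constraint. Because the objective and the selection/binary constraints (\ref{constraint:SumOne-constraint})--(\ref{constraint:binary-constraint}) are identical in the two problems, and because the definitions $\varpi_{jk}=\varphi_{jk}+\sum_{q}(\xi_{jk}^{q}+p_j\zeta_{jk}^{q})$ and $C_0=B^b+\sum_{q}(C^{q}+E^{q})$ are exactly the termwise sums of the left- and right-hand sides, the aggregated inequality coincides with the relaxed constraint (\ref{constraint:cachesize-constraint-relaxed}). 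Hence every \textbf{P2}-feasible point is \textbf{P3}-feasible with the same objective, giving $\mathrm{opt}(\textbf{P2})\le\mathrm{opt}(\textbf{P3})$. The only side condition to record is nonnegativity of all cost coefficients $\xi,\zeta,\varphi$, which is evident from Table~\ref{joint-decision} and is what allows summation to preserve the inequality direction.

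For the relaxation step, I would first drop the integrality constraint (\ref{constraint:binary-constraint}) to $x_{jk}\in[0,1]$; as \textbf{P3} is a maximization, this only enlarges the optimum, so it suffices to upper-bound the resulting single-constraint LP. Selecting in each class $j$ the index $k_{\mathrm{max}}$ of largest efficiency $\mathcal{R}_{jk}/\varpi_{jk}$ isolates the steepest profit-per-weight direction, and the two cases split on whether this base selection already overflows the aggregate budget $C_0$. In the underflow case $\varpi_0<C_0$ I would invoke the classical fractional-knapsack exchange argument on the residual items (merged into class $L$ and sorted by decreasing $\mathcal{R}_\ell/\varpi_\ell$): packing in efficiency order with a single fractional break item $\hat\ell$ is optimal for the relaxed residual problem, so combining the base profit $\mathcal{R}_0$ with this greedily packed residual yields exactly the stated $\mathcal{R}_{\mathrm{UB}}$, an upper bound on the LP optimum and hence on $\mathrm{opt}(\textbf{P3})$.

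The step I expect to be the main obstacle is the overflow case $\varpi_0>C_0$, where the bound collapses to the proportional form $\mathcal{R}_0\,C_0/\varpi_0$. Here I must verify two things: that merging residual items across classes into one knapsack genuinely relaxes (never tightens) the multiple-choice equality (\ref{constraint:SumOne-constraint}), and that the density $\mathcal{R}_0/\varpi_0$ over-estimates the profit density achievable by any fractional selection, despite the objective being profit-maximizing rather than purely efficiency-maximizing. I would settle this by a direct exchange argument on the LP: any unit of weight reallocated away from the maximum-efficiency items $k_{\mathrm{max}}$ cannot raise the total profit per unit weight, so $\mathcal{R}_0/\varpi_0$ upper-bounds the optimal profit density, and multiplying by the full capacity $C_0$ yields the bound; equivalently, one can exhibit a dual-feasible multiplier for the relaxed LP whose dual value equals $\mathcal{R}_0\,C_0/\varpi_0$. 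Once this density domination is secured in both cases, chaining $\mathrm{opt}(\textbf{P2})\le\mathrm{opt}(\textbf{P3})\le\mathcal{R}_{\mathrm{UB}}$ completes the proof.
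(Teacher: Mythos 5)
Your overall architecture---aggregate the seven constraints to show every \textbf{P2}-feasible point is \textbf{P3}-feasible, then bound \textbf{P3} by the greedy per-class efficiency construction in two cases---is exactly the paper's (the paper dispatches the \textbf{P2} step in one closing sentence, and your aggregation argument is a correct expansion of it; note, incidentally, that nonnegativity of the coefficients is not needed merely to add like-directed inequalities). Your underflow case also tracks the paper, though both you and the paper elide the same subtlety: a feasible \textbf{P3} point does not \emph{add} residual items to the base selection, it \emph{swaps} them against base items of the same class, so the comparison with the residual fractional knapsack needs the efficiency-domination inequality $\mathcal{R}_{jk}/\varpi_{jk}\le \mathcal{R}_{jk_{\mathrm{max}}}/\varpi_{jk_{\mathrm{max}}}$ to convert each profitable swap into a fractional residual item of weight $\varpi_{jk}-\varpi_{jk_{\mathrm{max}}}$ and profit at least $\mathcal{R}_{jk}-\mathcal{R}_{jk_{\mathrm{max}}}$; with that patch the case goes through.

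The genuine gap is the overflow case, precisely the step you flagged as the main obstacle: the density-domination claim you propose (``any unit of weight reallocated away from the $k_{\mathrm{max}}$ items cannot raise the profit per unit weight, hence $\mathcal{R}_0/\varpi_0$ bounds the optimal density'') is false under the multiple-choice constraint (\ref{constraint:SumOne-constraint}). Per-class ratio domination does not imply domination of the ratio of sums, because the two ratios weight the classes differently. Concrete instance: class 1 contains items $(\mathcal{R},\varpi)=(10,\,0.99)$ and $(1000,\,100)$; class 2 contains $(101,\,100)$ and $(0.01,\,0.01)$. Then $k_{\mathrm{max}}$ picks $(10,\,0.99)$ and $(101,\,100)$, so $\mathcal{R}_0=111$, $\varpi_0=100.99$; with $C_0=100.5$ we are in the overflow case and $\mathcal{R}_{\mathrm{UB}}=111\cdot 100.5/100.99\approx 110.5$, yet choosing $(1000,\,100)$ and $(0.01,\,0.01)$ is feasible ($100.01\le C_0$, one item per class) with value $1000.01$. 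Its aggregate density is $\approx 10$, far above $\mathcal{R}_0/\varpi_0\approx 1.1$, so no exchange argument can work, and no dual-feasible multiplier with value $\mathcal{R}_0 C_0/\varpi_0$ can exist, since the ``bound'' itself is beaten by a feasible point. You should also know that the paper's own proof stumbles on exactly this step: its contradiction argument asserts that $\mathcal{R}_{jk_{\mathrm{max}}}/\varpi_{jk_{\mathrm{max}}}\ge \mathcal{R}_{jk_j}/\varpi_{jk_j}$ for all $j$ implies $\sum_j \mathcal{R}_{jk_{\mathrm{max}}}/\sum_j\varpi_{jk_{\mathrm{max}}}\ge \sum_j\mathcal{R}_{jk_j}/\sum_j\varpi_{jk_j}$, which is the same invalid implication. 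Repairing the overflow case requires either a global hypothesis (every $k_{\mathrm{max}}$ efficiency at least the efficiency of every item in every other class) or abandoning the per-class max-efficiency greedy in favor of the standard LP relaxation of the multiple-choice knapsack (convex-hull dominance with incremental efficiencies), whose optimal value is a certified upper bound.
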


\begin{proof}
The proof can be shown by contradiction for the first case ($\varpi_0 > C_0$) of this lemma.
Suppose that there exists a solution $\hat{X} = (\hat{x}_{1k_1}, \cdots, \hat{x}_{jk_j}, \cdots, \hat{x}_{Jk_J} ) $ satisfying problem $\textbf{P3}$ with corresponding objective value $\hat{\mathcal{R}} = \sum_{j \in \mathcal{J}} \mathcal{R}_{jk_j}$ such that $\mathcal{R}_0 \times (\frac{C_0} {\varpi_0}) < \hat{\mathcal{R}}$. Then we have $\frac{\mathcal{R}_0} {\varpi_0} < \frac{\hat{\mathcal{R}}} {C_0}$. Since $\hat{\varpi} = \sum_{j \in \mathcal{J}} \varpi_{jk_j} \leq C_0$, we have $\frac{\hat{\mathcal{R}}} {C_0} \leq \frac{\hat{\mathcal{R}}} {\hat{\varpi}}$. Thus $\frac{\mathcal{R}_0} {\varpi_0} < \frac{\hat{\mathcal{R}}} {\hat{\varpi}}$.

According to the descending order of $\frac{\mathcal{R}_{jk}} {\varpi_{jk}}$, the inequality $\frac{\mathcal{R}_{jk_{\mathrm{max}}}} {\varpi_{jk_{\mathrm{max}}}} \geq \frac{\mathcal{R}_{jk_{j}}} {\varpi_{jk_{j}}}, \forall j \in \mathcal{J}$ holds, thus we have
\begin{equation}
   \frac{\sum_{j \in \mathcal{J}}\mathcal{R}_{jk_{\mathrm{max}}}} {\sum_{j \in \mathcal{J}} \varpi_{jk_{\mathrm{max}}}} \geq \frac{\sum_{j \in \mathcal{J}} \mathcal{R}_{jk_{j}}} {\sum_{j \in \mathcal{J}} \varpi_{jk_{j}}},
\end{equation}
which is contradictory to $\frac{\mathcal{R}_{\mathrm{max}}} {\varpi_{\mathrm{max}}} < \frac{\hat{\mathcal{R}}} {\hat{\varpi}}$.

For the second case ($\varpi_0 \leq C_0$), the remaining capacity is filled with the remaining items as a knapsack problem, so the upper bound can be obtained also by the greedy algorithm.

Since problem $\textbf{P3}$ is a relaxed problem of the problem $\textbf{P2}$, it is obvious that $\mathrm{UB}$ is an upper bound for $\textbf{P2}$.

\end{proof}

To determine an initial feasible solution that can be used as the starting lower bound, we develop a heuristic algorithm for obtaining the lower bound of problem $\textbf{P2}$ as described in Algorithm \ref{alg:HLB}. The algorithm is based on the conception of aggregate resource saving \cite{parra2005new} considering the multiple resource constraints in $\textbf{P2}$. Specifically, the solutions for each viewpoint that achieve lowest reliability is selected as initial solution. The solution is then upgraded by choosing a new solution for a viewpoint which has the maximum aggregate resource saving while increasing the objective function of total reliability. The aggregate resource saving is defined as
\begin{equation}
  \Delta \varpi_{jk} = \varpi_{j k_j } - \varpi_{jk}.
\end{equation}

Note that there might be some solutions that violate the constraints but achieve higher reliability. To obtain a tighter lower bound, we downgrade some of the solutions to get a feasible solution, which may increase the total value of reliability.

\renewcommand{\algorithmicrequire}{\textbf{Input:}}
\renewcommand{\algorithmicensure}{\textbf{Output:}}

\begin{algorithm}
\caption{Heuristic Lower Bound Computation Algorithm  }
\label{alg:HLB}
\LinesNumbered
\KwIn {$\mathcal{P}_{\mathrm{s}}$, $F$, $C$, $\epsilon$;}
\KwOut {Optimal solution $\textbf{p}^{*}=(p^*_i)_{i\in \mathcal{F}}$;}
\textbf{initialization:} ;

Select the lowest-valued solution for each viewpoint.

 Replace a strategy of a  viewpoint   which has the highest positive value of $\Delta \varpi_{jk}$ and subject to the resource constraints (\ref{constraint:cachesize-constraint})--(\ref{constraint:backhaul-constraint}) . If no such strategy is found, then a strategy with the highest  $ \Delta \mathcal{R}_{jk} / \Delta \varpi_{jk}$ ;

\eIf { no such solution is found in step 3 }
{
    go to step 8;
}
{
    look for another solution in step 3;
}

\If {there exist higher-valued solutions than the selected solution for any viewpoint }
{
    select a highest-valued strategy $\mathcal{R}_{jk}$ with the minimum aggregate resource consumption $\varpi_{jk}$;
}

Replace a lower-valued strategy $\mathcal{R}_{jk}$ that consumes the maximum aggregate resource $\varpi_{jk}$ with the strategy selected in step 9;

\eIf {a solution is found in \text{step 10}}
{
    \eIf {the solution satisfies the constraints (\ref{constraint:cachesize-constraint})--(\ref{constraint:backhaul-constraint})  }
    {
        look for a better solution in step 3;
    }
    {
        go to step 10 for another downgrade.
    }
}
{
    revive the solution obtained at the end of step 3 and terminate.
}

\end{algorithm}

\textit{2) Best-first principle of priority:}
The candidate nodes to be expanded are stored in the priority queue, and assume that the level of the search tree where a candidate node is located is denoted as $j_0$.  Define the best-first principle of priority as the maximum upper bounds of the candidate nodes, which can be calculated as
\begin{align}\label{best-first-principle}
  \mathcal{P} =  \sum_{j=1}^{j_0}  p_j \mathcal{R}_{jk_j}x_{jk_j} + \widetilde{\mathcal{R}}_{\mathrm{UB}},
\end{align}
where $\widetilde{\mathcal{R}}_{\mathrm{UB}}$ is the upper bound of the remaining capacity constraint problem of a  candidate node as follows,
\begin{subequations}
\begin{align}
\textbf{P4: } \max_{\{x_{jk}\}}    &\quad  \sum_{j=j_0+1}^{J} \sum_{k \in \mathcal{K}} p_j \mathcal{R}_{jk} x_{jk}    \\
\text{s.t.}    &\quad  \sum_{j=j_0 + 1}^{J} \sum_{k \in \mathcal{K}} \varpi_{jk} x_{jk} \leq C_0 - \sum_{j=1}^{j_0} \sum_{k \in \mathcal{K}} \varpi_{jk_j} x_{jk_j}  , \label{constraint:cachesize-constraint-relaxed2}  \\
 &\quad \text{(\ref{constraint:SumOne-constraint}), (\ref{constraint:binary-constraint})}.  \nonumber
\end{align}
\end{subequations}
The best-first principle of priority $\mathcal{P}$ is the sum of the current reliability value of a candidate node and its remaining possible reliability value, i.e., the upper bound of problem \textbf{P4}. Note that the form of problem \textbf{P4} is the same as that of problem \textbf{P3}, thus the upper bound of problem \textbf{P4} can be obtained using the similar method described in Lemma \ref{lemma4}.

The advantage of applying the best-first principle of priority is that the first obtained feasible solution using the branch and bound searching method achieves the maximum value of reliability of VR delivery, which accelerates the searching process to find the optimal solution faster.


\textit{3) The proposed BFBB algorithm:}
In this subsection, we propose the optimal joint caching and computing solution based on BFBB algorithm to solve problem \textbf{P2}, as described in Algorithm \ref{alg:BB}.
The BFBB algorithm starts by initializing 18 strategies of the first  viewpoint into the priority queue, and using Algorithm \ref{alg:HLB} to obtain the lower bound of the solution. The next node to be expanded is selected according to the best-first principle of priority (\ref{best-first-principle}), and the lower bound is then updated by directly computing the solution of $\textbf{x}_{j_0 k_{j_0}}$. To perform the branch and bound operation, the feasibility of the expanded child node is verified, and the upper bound of the extended child node is computed using (\ref{best-first-principle}). To accelerate the searching process, the lower bound $\mathcal{R}_{\mathrm{LB}}$ is updated using Algorithm \ref{alg:HLB} in step \ref{step13}, thus more  branches of the search tree can be pruned in step \ref{step11}.
The optimality of the BFBB algorithm is guaranteed by the best-first principle. Since the nodes added into the PQ all correspond to feasible solutions, when $j_0 = J$, i.e., at step 5, the algorithm obtains the biggest value such that $\textbf{x}_{Jk_J}$ is feasible. Then $\textbf{x}_{Jk_J}$ is the optimal solution for problem \textbf{P2}.

\begin{algorithm}
\caption{The Optimal Joint Caching and Computing Solution based on BFBB }
\label{alg:BB}
\LinesNumbered
\KwIn {$p_j, \mathcal{R}_{jk}, \xi_{jk}^{q}, \zeta_{jk}^{q}, \varphi_{jk}^{q}, C^q, E^q, B^b, j \in \mathcal{J}, k \in \mathcal{K}, q \in \{ H, \mu, m \}$;}
\KwOut {Optimal solution $\{x_{jk}^{*}\}, j \in \mathcal{J}, k \in \mathcal{K}$;}
\textbf{initialization:} Using Algorithm \ref{alg:HLB} to obtain the lower bound of the solution $\mathcal{R}_{\mathrm{LB}}$; denote the priority queue as $\text{PQ}$;

\While{\rm{PQ} $\neq \varnothing$}
{
    Select a node $\textbf{x}_{j_0 k_{j_0}} \triangleq \{ x_{1 k_1}, x_{2 k_2}, \cdots, x_{j_0 k_{j_0}} \} $ from \rm{PQ} according to the best-first principle of priority (\ref{best-first-principle}), and remove $\textbf{x}_{j_0 k_{j_0}}$ from \rm{PQ};

    Update the lower bound of the solution $\mathcal{R}_{\mathrm{LB}}$ using  $\textbf{x}_{j_0 k_{j_0}}$;

     \If{$j_0 = J $ }
    {
        Exit the loop with the optimal solution $\textbf{x}_{J k_{J}}$;
    }

    Expand the node $\textbf{x}_{j_0 k_{j_0}}$ to obtain its child nodes $\mathbb{C}$;

    \While{ $\mathbb{C} \neq \varnothing$  }
    {
        \If{ $\textbf{x}_{{(j_0+1)} k}$ corresponds to a feasible solution }
        {
            Calculate the upper bound of the solution $\mathcal{R}_{\mathrm{UB}}$ containing $\textbf{x}_{{(j_0+1)} k}$ using (\ref{best-first-principle});

            \If{$\mathcal{R}_{\mathrm{UB}} > \mathcal{R}_{\mathrm{LB}}$   \label{step11} }
            {
                add $\textbf{x}_{{(j_0+1)} k}$ into \rm{PQ};

                Update the current maximum value of $\mathcal{R}_{\mathrm{LB}}$ using Algorithm \ref{alg:HLB}; \label{step13}
            }

        }

    }

}
\end{algorithm}

\subsection{Suboptimal solution based on DCP}
In this subsection, a suboptimal solution for solving \textbf{P2} is proposed by using difference of convex programming (DCP) with lower computation complexity. First, we equivalently transform \textbf{P2} into a continuous optimization problem, which can be written as,
\begin{subequations}
\begin{align}
\textbf{P5: } \min_{\{x_{jk}\}}   &\quad  \sum_{j \in \mathcal{J}} \sum_{k \in \mathcal{K}} - p_j \mathcal{R}_{jk} x_{jk}   \\
\text{s.t.} &\quad  0 \leq x_{jk} \leq 1, j \in \mathcal{J} , k \in \mathcal{K}, \label{constraint:continuous-variable} \\
&\quad  \sum_{j \in \mathcal{J}} \sum_{k \in \mathcal{K}}  x_{jk} (1 - x_{jk}) \leq 0, \label{constraint:concave-constraint}  \\
&\quad   (\ref{constraint:cachesize-constraint}),  (\ref{constraint:energy-constraint}), (\ref{constraint:backhaul-constraint}), (\ref{constraint:SumOne-constraint}) , \nonumber
\end{align}
\end{subequations}
where the binary variable constraint (\ref{constraint:binary-constraint}) in \textbf{P2} is substituted by continuous variable constraints (\ref{constraint:continuous-variable}) and (\ref{constraint:concave-constraint}). Thus, by solving the continuous optimization problem \textbf{P5}, the computation complexity is greatly reduced compared with that by directly solving \textbf{P2}. Nevertheless, the constraint (\ref{constraint:concave-constraint}) is a concave function instead of a convex function, which becomes the main difficulty for solving \textbf{P5}.

\begin{algorithm}
\caption{Suboptimal solution based on DCP  }
\LinesNumbered
\label{alg:DCA}
\KwIn {$p_j, \mathcal{R}_{jk}, \xi_{jk}^{q}, \zeta_{jk}^{q}, \varphi_{jk}^{q}, C^q, E^q, B^b, j \in \mathcal{J}, k \in \mathcal{K}, q \in \{ H, \mu, m \}$;}
\KwOut {Suboptimal solution of $\{x_{jk}  \}$;}

\textbf{Initialize:} counter $i = 0$, obtain an initial feasible solution $\{ x_{jk}^{0} \}$ using Algorithm \ref{alg:HLB}, threshold $\epsilon = 10^{-5}$;

\Repeat
{$\left( f_0(\textbf{x}^{i-1}) - \phi f(\textbf{x}^{i-1}) \right) - \left( f_0(\textbf{x}^i) - \phi f(\textbf{x}^i) \right) \leq \epsilon $}
{
\textbf{Compute:} $\hat{f}(\textbf{x}; \textbf{x}^{i}) \triangleq f(\textbf{x}^{i}) + \nabla f(\textbf{x}^{i})^{T} (\textbf{x} - \textbf{x}^{i})$.

\textbf{Solve:} Set the value of $\{ x_{jk}^{i+1} \}$ to be a solution of the following convex problem:
\begin{align}
\min_{\{x_{jk}\}}  &\quad f_0 (\{ x_{jk} \}) - \phi \hat{f}(\{ x_{jk} \}; \{ x_{jk}^{i} \}) \nonumber \\
\text{s.t.} &\quad   (\ref{constraint:cachesize-constraint}),  (\ref{constraint:energy-constraint}), (\ref{constraint:backhaul-constraint}), (\ref{constraint:SumOne-constraint}), (\ref{constraint:continuous-variable}). \nonumber
\end{align}

where $f_0 (\{ x_{jk} \}) = \sum_{j \in \mathcal{J}} \sum_{k \in \mathcal{K}} - p_j \mathcal{R}_{jk} x_{jk}$.

\textbf{Update:} $i \leftarrow i + 1$, $\{x_{jk}^{i} \} \leftarrow  \{ x_{jk}^{i-1} \}$;
}

\end{algorithm}

To tackle this difficulty, we utilize a penalty function to bring the concave constraint into the objective function, which can be written as,
\begin{subequations}
\begin{align}
\textbf{P6: } \min_{\{x_{jk}\}}   &\quad  \sum_{j \in \mathcal{J}} \sum_{k \in \mathcal{K}} - p_j \mathcal{R}_{jk} x_{jk} - \phi f(\textbf{x})  \\
\text{s.t.} &\quad   (\ref{constraint:cachesize-constraint}),  (\ref{constraint:energy-constraint}), (\ref{constraint:backhaul-constraint}), (\ref{constraint:SumOne-constraint}), (\ref{constraint:continuous-variable}), \nonumber
\end{align}
\end{subequations}
where $\phi$ is a penalty parameter and $\phi > 0 $, $f(\textbf{x}) = \sum_{j \in \mathcal{J}} \sum_{k \in \mathcal{K}} x_{jk} (x_{jk} - 1) $. To ensure the equivalence between \textbf{P5} and \textbf{P6}, according to the exact penalty property of DCP in \cite{Thi2012exact}, $\phi $ should satisfy,
\begin{equation}\label{penalty-condition}
  \phi > \frac{- p_j \mathcal{R}_{jk} x_{jk}^{0} - g(0)} {\phi_0},
\end{equation}
where $x_{jk}^{0}$ denotes any feasible solution, $g(\phi)$ denotes the optimal objective value for \textbf{P6}, and $\phi_0 = \min_{\textbf{x}} \{ \sum_{j \in \mathcal{J}} \sum_{k \in \mathcal{K}} x_{jk} (1 - x_{jk} ) : (\ref{constraint:cachesize-constraint}),  (\ref{constraint:energy-constraint}), (\ref{constraint:backhaul-constraint}), (\ref{constraint:SumOne-constraint}), (\ref{constraint:continuous-variable})\}$.

Note that the objective function in \textbf{P6} is in the form of a difference of two convex functions. Therefore, we adopt DCP to solve this problem as outlined in Algorithm \ref{alg:DCA}. Specifically, the objective function is convexified by the affine minorization function of the second term $f(\textbf{x})$ in step 3.
The complexity of Algorithm \ref{alg:DCA} lies in solving a series of convex problems in steps 3-6. These convex problems can be solved using the polynomial interior point method \cite{boyd2004convex}, which requires a third degree polynomial complexity in terms of the number of variables. Let $L$ be the total number of iterations, then the complexity of Algorithm \ref{alg:DCA} is  $\mathcal{O}(L (JK)^3)$, where $J$ is the number of viewpoints, and $K$ is the number of strategies for the $j$-th viewpoint.

\textcolor{blue}{
\textit{Remark 6: }
It is worth mentioning that the authors in \cite{dang2019joint} proposed to use the Lagrangian dual decomposition (LDD) approach to obtain a suboptimal solution of the MMKP problem. The Lagrangian relaxation method was adopted in \cite{dang2019joint} to relax all the resource constraints into the objective function, without considering the specific structure of the objective function and the constraints, so it is hard to obtain a high-quality solution close to the optimal solution by using the LDD approach. Besides, the solution of the Lagrangian relaxed problem is not necessarily a feasible solution to the original problem, so the initial condition need to be changed multiple times to obtain a feasible solution of the original problem.
In comparison, we first equivalently transform the MMKP problem into a continuous non-convex optimization problem. To deal with the concave function in the constraints, we exploit the specific structure of the problem and bring the concave constraint into the objective function. Note that this is also an equivalent problem transformation owing to the use of exact penalty property of the DCP problem. Further, the DCP problem is solved by approximating the penalty function as its affine minorization function, which can obtain a high-quality feasible solution effectively, and will be validated in Sec. V.
}

\textcolor{blue}{
\textit{Remark 7 (Practical Implication):} The proposed BFBB and DCP algorithms are implemented based on the pre-known statistical channel state information of sub-6 GHz and mmWave links, as well as the pre-known caching, computing and bandwidth resource information, thus they are offline algorithms which can obtain the optimal and sub-optimal joint caching and computing strategy, respectively. The overall decision is get made in two steps in the practical implementation. First, during the off-peak time, the HMD and the MEC servers obtain the joint caching and computing strategy by using the proposed BFBB or DCP algorithm, and perform caching placement according to the caching strategy. Second, in the viewpoint delivery phase, the HMD calculates the end-to-end delay of sub-6 GHz links and mmWave links based on the channel state information and the joint caching and computing strategy. Then the HMD selects the link to receive viewpoint data according to the minimum-delay delivery criteria, thus the communication strategy can be determined.
}

%
%

\vspace{3mm}

\section{Performance Evaluation}
In this section, the performances of the proposed BFBB and DCP algorithms are evaluated through numerical simulations. The default parameter values of network environment, mobile edge resources and VR videos are listed in Table \ref{simulationsetting}.
\textcolor{blue}{
The $\mu$BSs, mBSs, and HMDs are scattered in a square area of 1km $\times$ 1km based on three independent PPPs with densities listed in Table \ref{simulationsetting}, and the mobility of the HMDs follows the random-walk model \cite{chiang20042}.
}
\textcolor{blue}{For the sake of comparison, five benchmark algorithms are provided including the LDD algorithm used in \cite{dang2019joint}, and the other four benchmark algorithms listed as follows,
}
\begin{table}
  \centering
  \caption{Parameter Values}
  \resizebox{\textwidth}{4cm}{
  \label{simulationsetting}

  \begin{tabular}{|l|l|l|}
    \hline
    \textbf{Parameters}   & \textbf{Physical meaning}        & \textbf{Values}\\\hline
     $P_{\mu}$ /  $P_{m}$  &   Transmit power of  $\mu$BSs / mBSs      &30  / 30 dBm      \\\hline
     $B_\mu$ / $B_m$  & Bandwidth assigned to each user at sub-6 GHz / mmWave  &  100  / 500 MHz  \\\hline
     $\alpha_{\mathrm{L}}$ / $\alpha_{\mathrm{N}}$  & Path loss exponent of LOS and NLOS  & 2.5 / 4     \\\hline
     $\theta$  & Mainlobe beamwidth  &  30$^\circ$      \\\hline
     $M$ / $m$  & Mainlobe antenna gain / sidelobe antenna gain & 10  / -10 dB  \\\hline
     $\kappa$   &  Blockage density  &  6$\times$10$^{-4}$  (Unless otherwise stated)  \\\hline
     $N_\mathrm{L}$ / $N_\mathrm{N}$   &   Nakagami fading parameter for LOS and NLOS channel & 3 / 2  \\\hline
     $\lambda_{\mu}$ / $\lambda_{m}$  \textcolor{blue}{/ $\lambda_{H}$} &  density of $\mu$BSs / mBSs  \textcolor{blue}{/ HMDs}  &  10$^{-5}$ / 3$\times$10$^{-5}$  \textcolor{blue}{/ 10$^{-4}$}   nodes/m$^2$   \\\hline
     $T_j$  &  End-to-end delay threshold  &  20 ms (Unless otherwise stated)   \\\hline
     $\delta$  &  Skewness of the viewpoint popularity   &  0.8  \\\hline
     \textcolor{blue}{$C^{\mu}$  /  $C^{m}$  /  $C^{H}$ }  &   \textcolor{blue}{The cache size at $\mu$BSs / mBSs /  HMDs }   &  \textcolor{blue}{30 / 30 / 10 Mb } \\\hline
     $J$  &  The number of viewpoints  &  20  \\\hline
     $d_j^M$ / $d_j^S$  &  Size of MVs / SVs  &   1  / 3 Mb    \\\hline
     $\varepsilon$  &  The number of computation cycles required for 1 bit input  &  10    \\\hline
     $f_H$ / $f_\mu$ / $f_m$  &  CPU cycle of HMDs / $\mu$BSs / mBSs    &   10$^9$ / 3$\times$10$^9$ / 3$\times$10$^9$   \\\hline
     $\eta_H$ / $\eta_\mu$ / $\eta_m$  & Energy efficiency coefficient of HMDs / $\mu$BSs / mBSs   &  10$^{-25}$ / 10$^{-26}$ / 10$^{-26}$   \\\hline
     $\tau_j^r$  & Backhaul retrieving delay   &  10 ms  \\\hline
  \end{tabular}}
\end{table}

\begin{itemize}
  \item Local SV caching (LSC):
    First, the SVs are cached locally at the HMD according to the descending order of the popularity of the viewpoints until the caching capacity is full. Then, the remaining SVs  are cached at $\mu$BSs and mBSs according to the descending order of the popularity until the caching capacities are full. Then the remaining viewpoints are delivered via the backhaul.

  \item Local computing with MV caching preferentially (LC-MCP):
  The MVs are cached at the HMD according to the descending order of the popularity of the viewpoints until the local computing resources are fully utilized. If there remains caching capacity, the SVs are cached until the caching capacity is full. The remaining  MVs  are  cached at $\mu$BSs and mBSs until the edge computing resources or caching capacities are fully utilized.

  \item $\mu$BS-only: Only $\mu$BSs are deployed in the VR delivery network, i.e., let $\lambda_{m} = 0$, and use the proposed BFBB algorithm to make the joint caching and computing decision.

  \item mBS-only: Only mBSs are deployed in the VR delivery network, i.e., let $\lambda_{\mu} = 0$, and use the proposed BFBB algorithm to make the joint caching and computing decision.
\end{itemize}


\subsection{Reliability Performance of the Proposed Algorithms}
\textcolor{blue}{We first evaluate the reliability of VR delivery with various network parameters as shown in Fig. \ref{fig:relia-blockage} and Fig. \ref{fig:relia-threshold}. It is observed that the BFBB algorithm achieves the highest reliability, and the DCP algorithm achieves reliability close to that of the BFBB algorithm, which is 97.8\% of that of the BFBB algorithm on average. This is because the proposed algorithms comprehensively consider the characteristics of the \textcolor{blue}{sub-6 GHz} link and the mmWave link to determine a joint caching and computing strategy. Note that the LDD algorithm is not able to achieve the same reliability performance as the DCP algorithm, which is mainly due to the operation of relaxing all the constraints into the objective function without exploiting the specific structure of the MMKP problem as the DCP algorithm does.}

\textcolor{blue}{
The reliability of VR delivery with various blockage densities is shown in Fig. \ref{fig:relia-blockage}.
}
\textcolor{blue}{
The blockage model is in accordance with the NLOS probability function defined in Sec. II-A, and the blockages are generated according to the blockage density parameter $\kappa$ and the NLOS probability function. We assume independent LOS probability between different links, i.e., potential correlations of blockage effects between links are ignored, which causes negligible loss of accuracy for performance evaluation \cite{bai2014coverage, bai2014analysis}.
}
It is observed that the performance of LC-MCP is better than that of LSC. This is because caching MVs in the HMD can save more caching capacity and more viewpoints can be cached. Using the local computing capability of the HMD can help more viewpoints to be played in time. The mBS-only can achieve performance close to the proposed algorithms when the blockage density is low, while the performance is significantly reduced when the blockage density is increased. This is because the blockage effect in the mmWave tier causes the link rate to drop rapidly, and the viewpoints cannot be transmitted through the relatively stable \textcolor{blue}{sub-6 GHz} link, making the reliability of VR delivery very low. In contrast, 
\textcolor{blue}{the reliability of VR delivery using the $\mu$BS-only strategy remains unchanged with various blockage densities. This is because the wireless channel fluctuation of the sub-6 GHz tier is not affected by the blockages, which makes the reliability independent of blockage factors.}

\begin{figure}
  \centering
  \subfigure[]{
    \label{fig:relia-blockage} 
    \includegraphics[width=0.31\textwidth]{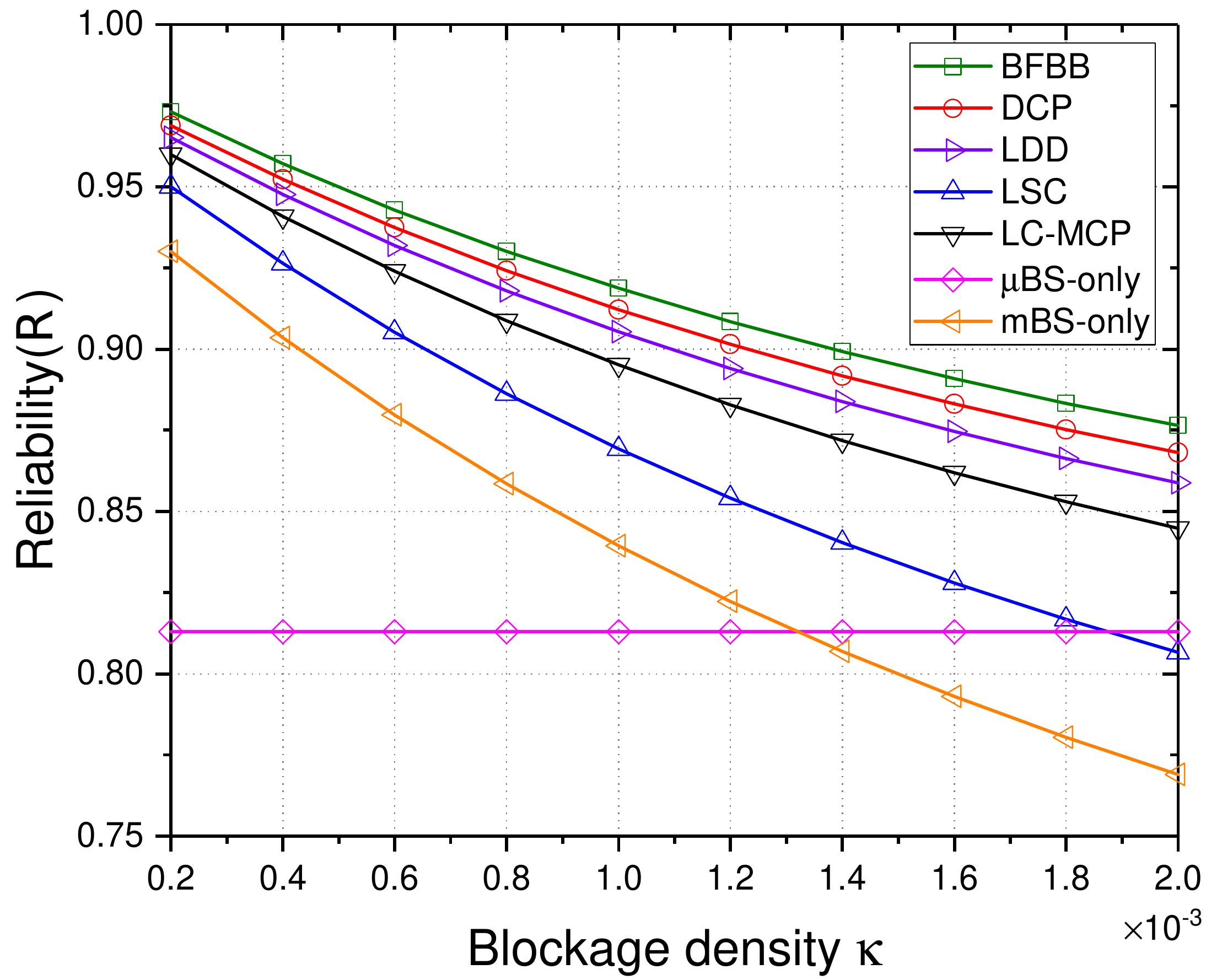}}
  \hspace{0.02in}
  \subfigure[]{
    \label{fig:relia-threshold} 
    \includegraphics[width=0.31\textwidth]{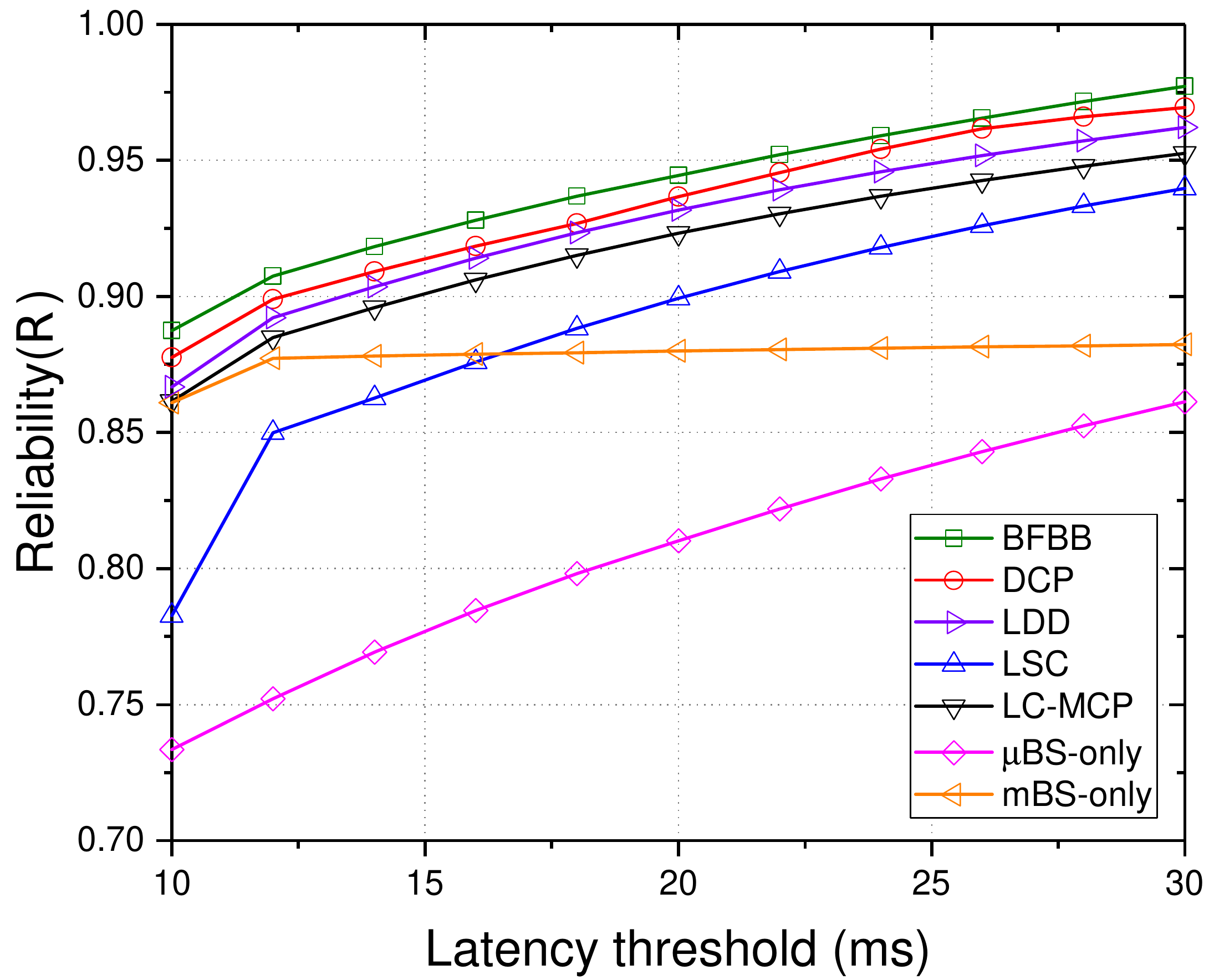}}
  \hspace{0.02in}
  \subfigure[]{
    \label{fig:RunTime} 
    \includegraphics[width=0.303\textwidth]{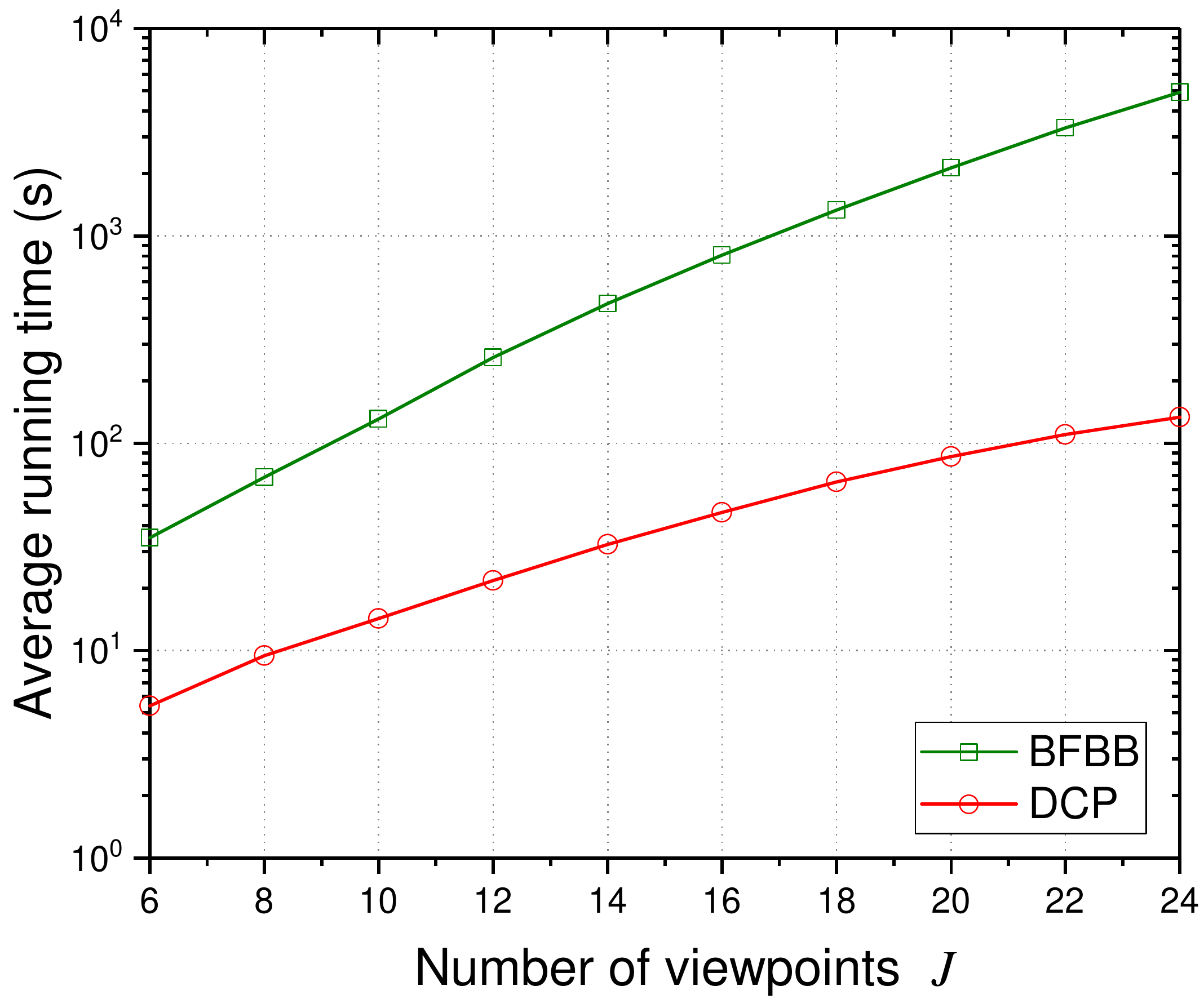}}
  \caption{\textcolor{blue}{(a) Reliability of VR delivery versus blockage density $\kappa$, and (b) reliability of VR delivery versus latency threshold, (c) average running time of the proposed algorithms with various numbers of viewpoints.} }
\end{figure}


Fig. \ref{fig:relia-threshold} shows the reliability of VR delivery with various  delay thresholds. It can be seen that the proposed BFBB and DCP  algorithms also achieve higher reliability. The performance of LC-MCP is better than that of LSC, especially when the delay threshold is low. This is because LC-MCP can cache more viewpoints locally or at BSs, eliminating the backhaul retrieve delay, which is more important when the delay threshold is low. Observing the $\mu$BS-only and mBS-only algorithms, it is seen that mBS-only basically does not change with the increase of the delay threshold, while  $\mu$BS-only  increases much with the increase  of the delay threshold. This indicates that the CDFs of the coverage probabilities of the two channels have different characteristics, and it is necessary to combine the complementary advantages of the two channels to achieve higher reliability.

\textcolor{blue}{
Fig. \ref{fig:RunTime} shows the average running time of the proposed algorithms with various numbers of viewpoints, which is tested on the computer with an Intel Core CPU i5 with a clock rate of 2.30 GHz, and an RAM with 8 G. It is observed that the DCP algorithm runs much faster than the BFBB algorithm, especially when the number of viewpoint becomes larger. This indicates that the DCP algorithm is a more practical algorithm than the BFBB algorithm when applied to the scenario where the number of viewpoints is large.
On the other hand, the BFBB algorithm is also meaningful, because it can be used to obtain the optimal solution of the MMKP problem, which can be viewed as the upper bound of the reliability performance in the considered scenario. Based on this optimal solution, the distance between the results obtained by the suboptimal algorithm and the optimal algorithm can be calculated, which can be used to quantitatively measure the performance of the suboptimal algorithm.
}

\begin{figure}
  \begin{minipage}[h]{0.70\linewidth}
  \subfigure[]{
    \label{fig:cachesize-3d} 
    \includegraphics[width=0.50\textwidth]{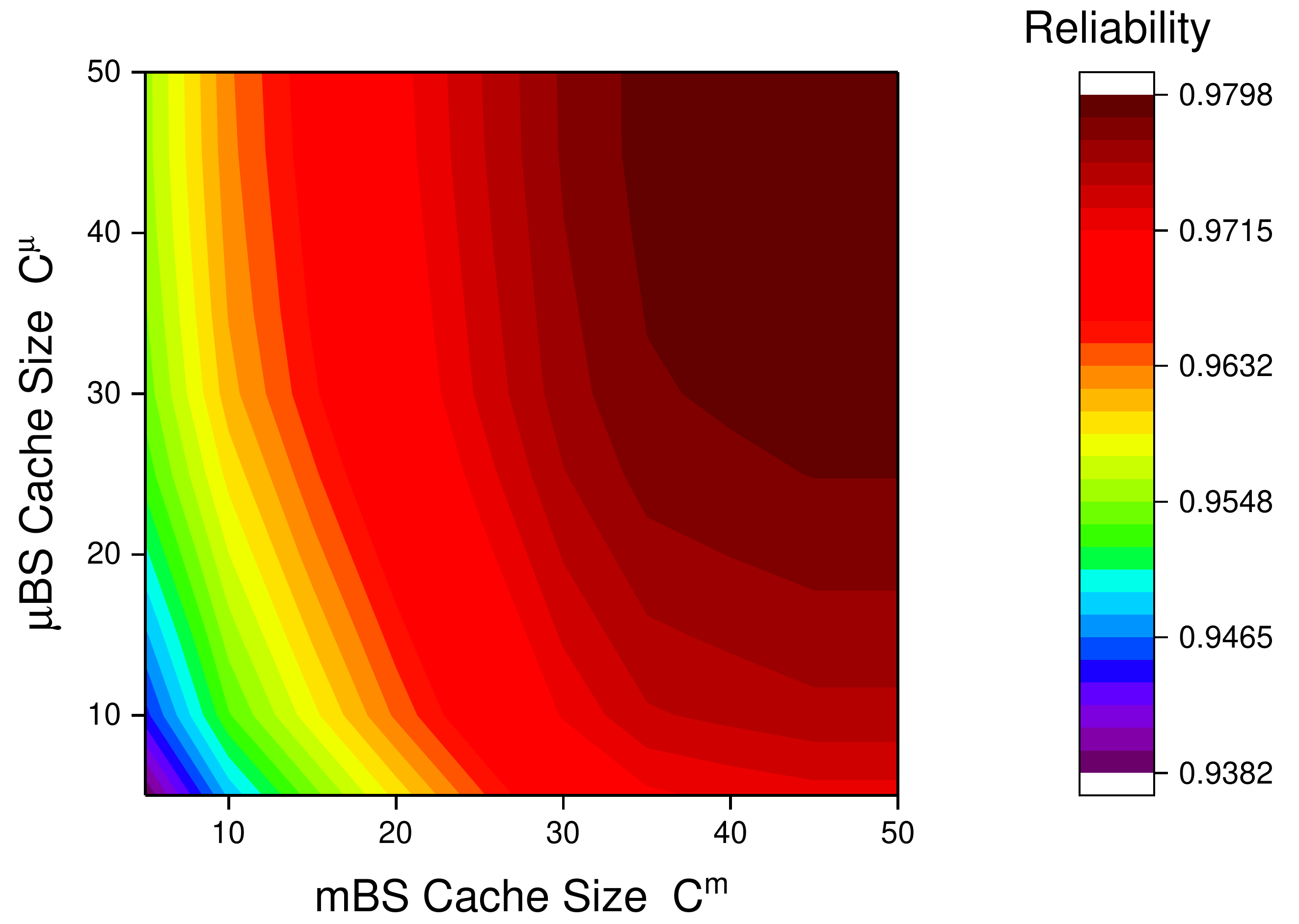}}
    \hspace{-3mm}
  \subfigure[]{
    \label{fig:fmufm-3d} 
    \includegraphics[width=0.49\textwidth]{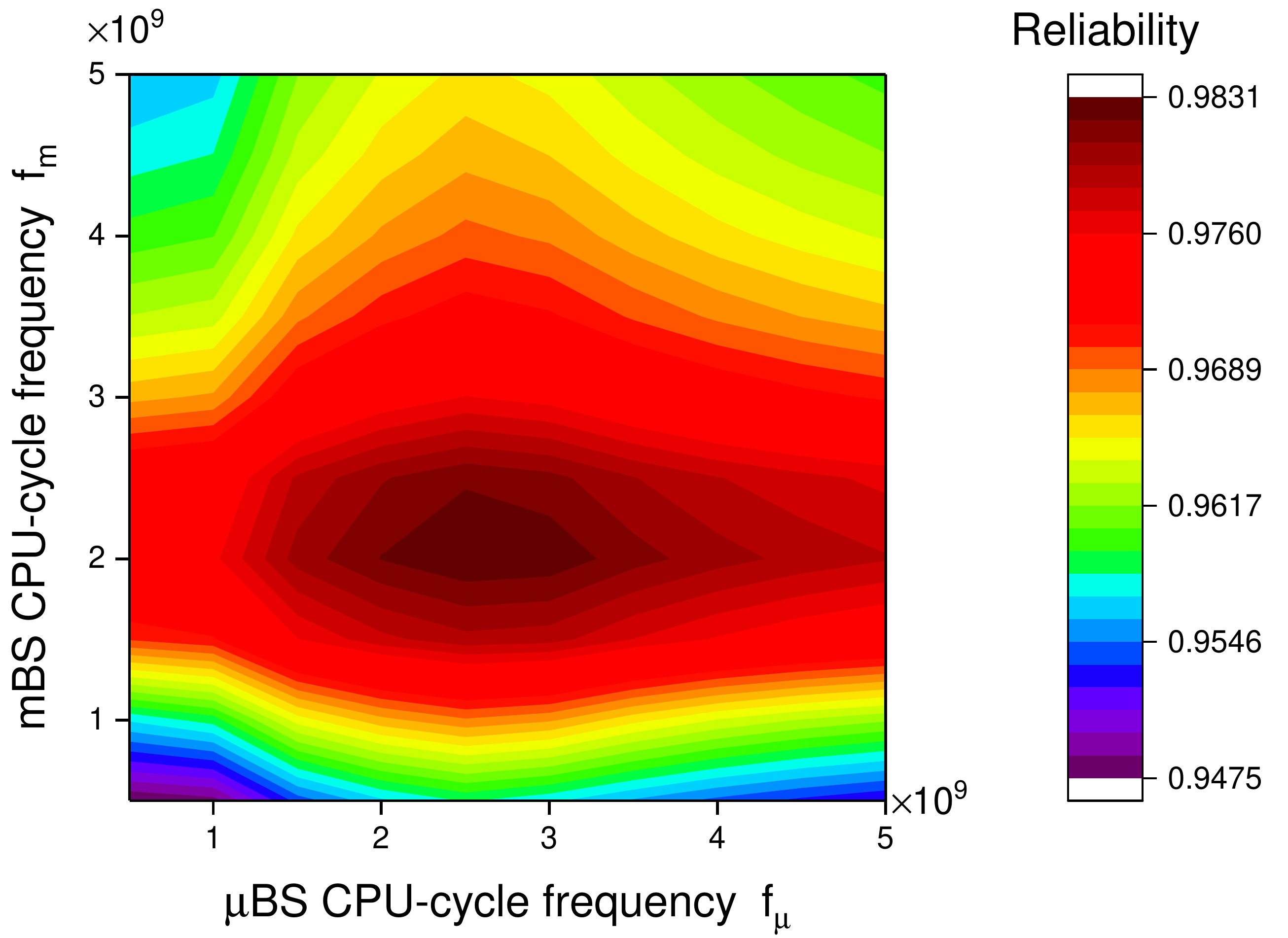}}
  \caption{(a) Reliability of VR delivery for various cache size of $\mu$BSs and mBSs, and (b) reliability of VR delivery for various CPU-cycle frequency of $\mu$BSs and mBSs. }
  \end{minipage}
  \hspace{0.04in}
  \begin{minipage}[h]{0.28\linewidth}\label{fig:4}
  \includegraphics[width=1.0\textwidth]{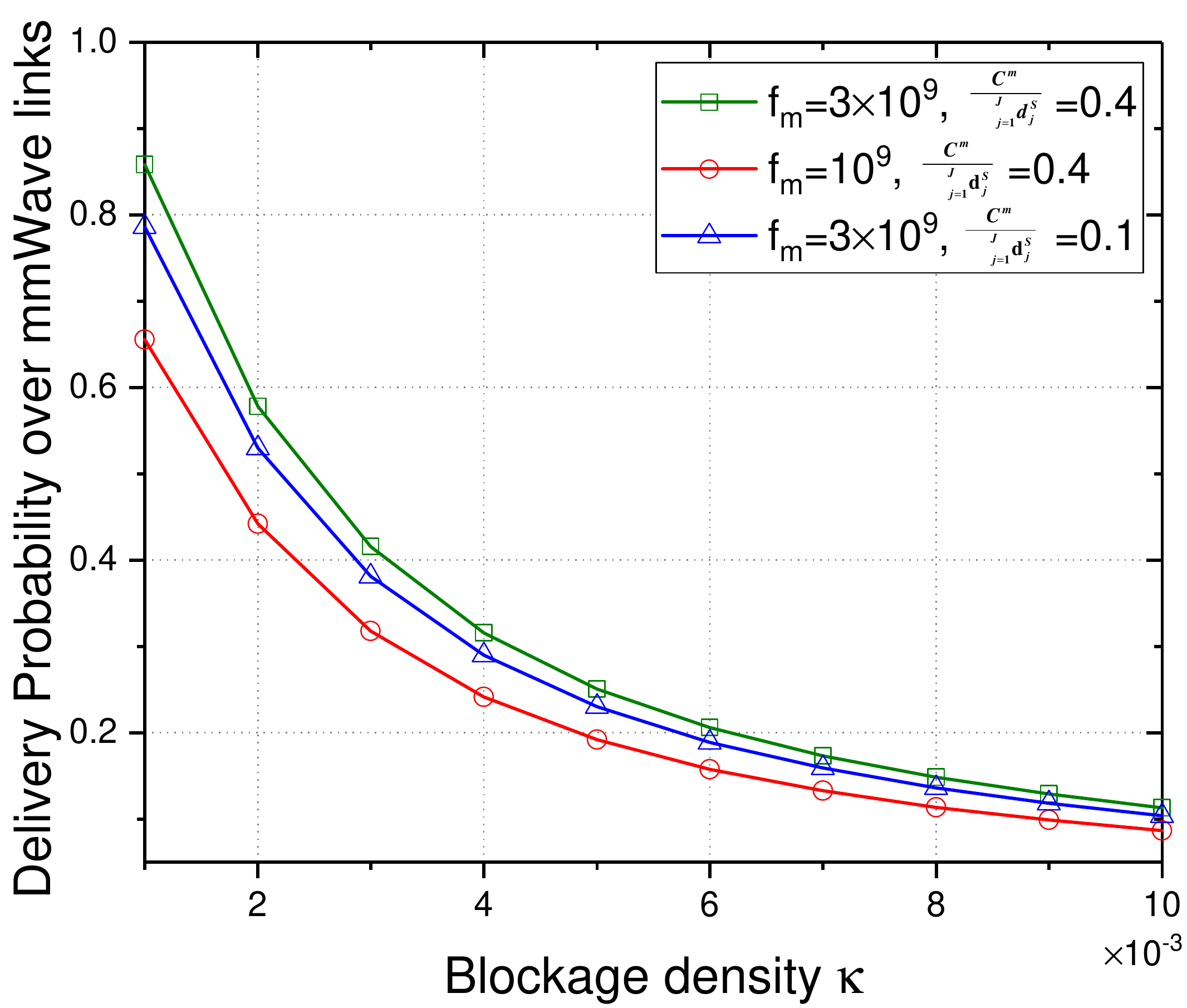}
  \caption{Delivery probability over mmWave links for various mmWave network parameters. }  %
  \end{minipage}
\end{figure}

The impact of caching capacities of $\mu$BSs and mBSs on the reliability of VR delivery is shown in Fig. \ref{fig:cachesize-3d}. In general, the reliability is higher with a larger caching capacity. It is observed that the caching capacity has a greater gain in the reliability of VR delivery over mmWave links. For example, the reliability is about 0.97 when $C^\mu=10, C^m=50$, while the reliability is about 0.95 when $C^\mu=50, C^m=10$. This is because the SVs cached at mBSs can be delivered over large-bandwidth mmWave links, so that the SVs  can be delivered to the HMD in time.

Fig. \ref{fig:fmufm-3d} shows the impact of CPU-cycle frequency reliability of $\mu$BSs and mBSs on the reliability of VR delivery. It is observed that, it is not that the higher the CPU-cycle frequency, the higher the reliability can be obtained. The reason is analyzed in conjunction with the energy consumption limitation  in the following subsection.  Note that the reliability of VR delivery is more sensitive to $f_m$ compared with $f_\mu$. In other words, the impact on the reliability is greater when $f_m$ is changed. This is because the mBSs needs to consume more computing resources to project and render the MVs into SVs, and deliver them over the  large-bandwidth mmWave links. In contrast, the number of MVs computed into SVs at $\mu$BSs is relatively small, which results in less impact on the reliability.

\subsection{Delivery Probability}

The delivery probability over mmWave links is validated with various network parameters as shown in Fig. 4. It can be observed  that the delivery probability over mBSs decreases with the increase of the blockage density due to the deteriorating channel conditions, which reflects the importance of \textcolor{blue}{sub-6 GHz} links to enhance the reliability of VR delivery despite the large mmWave bandwidth.
In addition, the delivery probability over mBSs increases  when the CPU-cycle frequency or the caching capacity of mBSs increases. This is because the increase in  caching  and computing  resources reduces the average end-to-end delay, thus more viewpoints are delivered over  mmWave links  according to the minimum delay criterion.

\begin{figure}
  \centering
  \subfigure[]{
    \label{fig:prop-caching-strategy} 
    \includegraphics[width=0.3\textwidth]{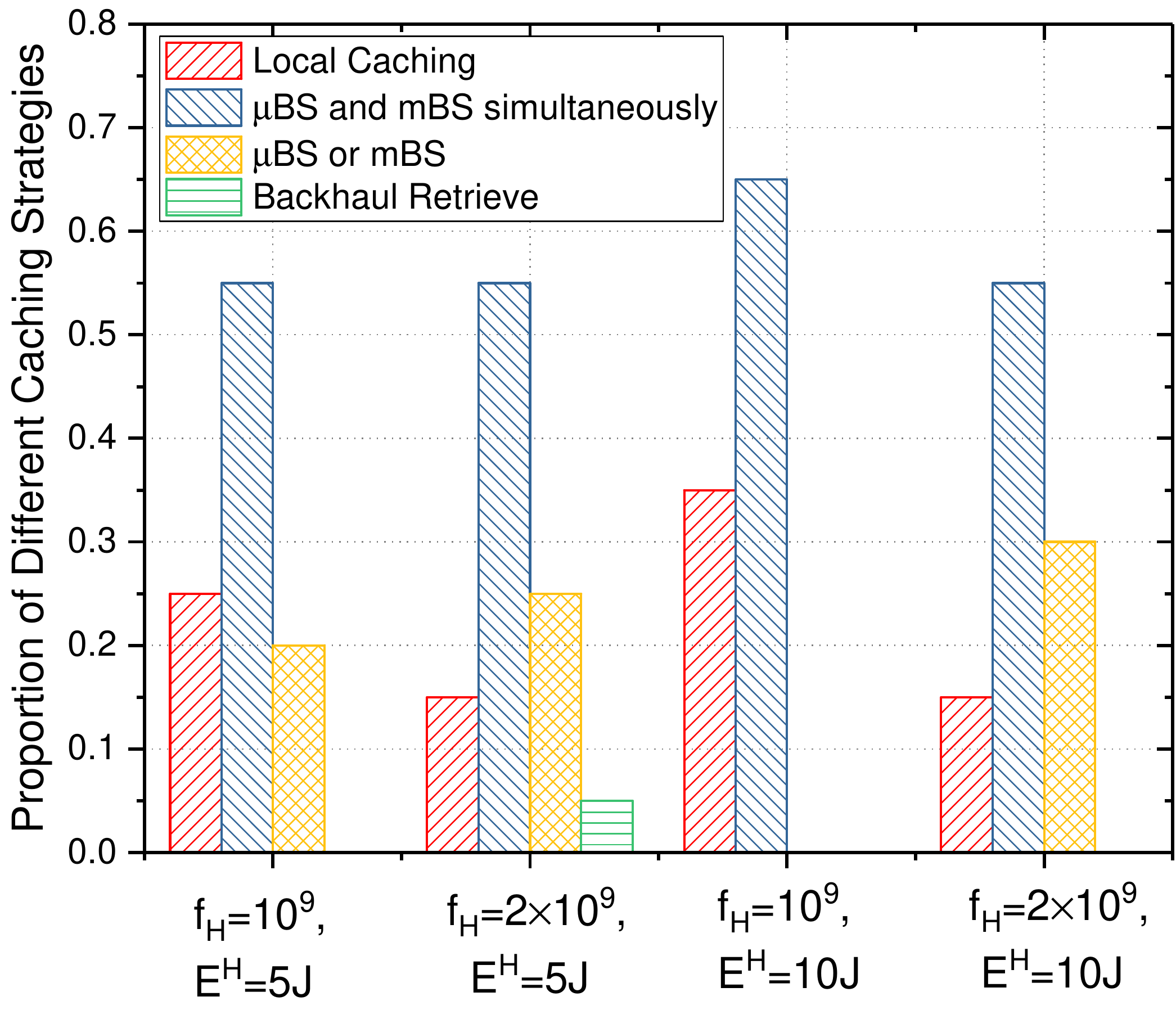}}
  \hspace{0.05in}
  \subfigure[]{
    \label{fig:prop-computing-strategy} 
    \includegraphics[width=0.3\textwidth]{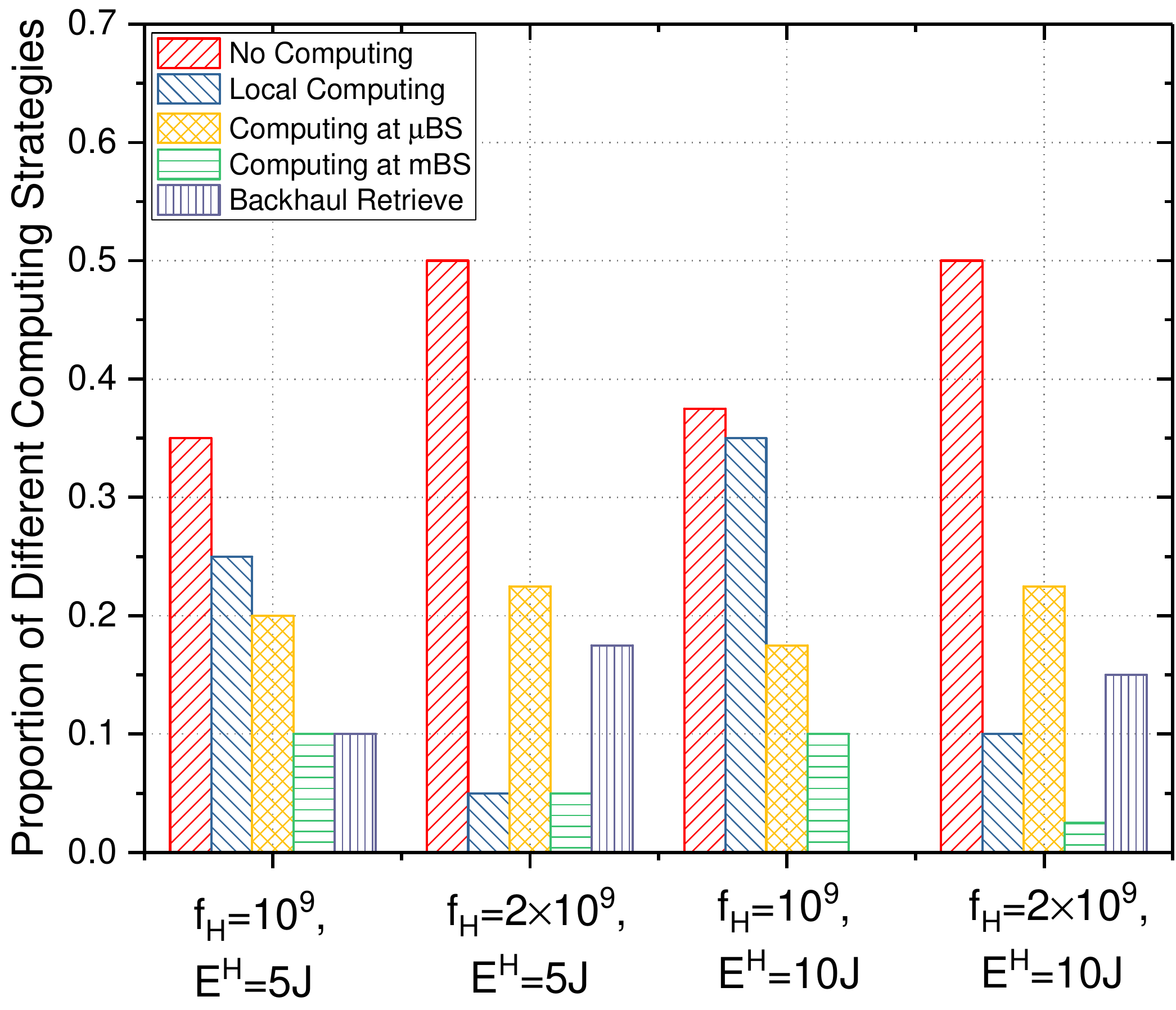}}
  \hspace{0.05in}
  \subfigure[]{
    \label{fig:prop-caching-2D3D} 
    \includegraphics[width=0.3\textwidth]{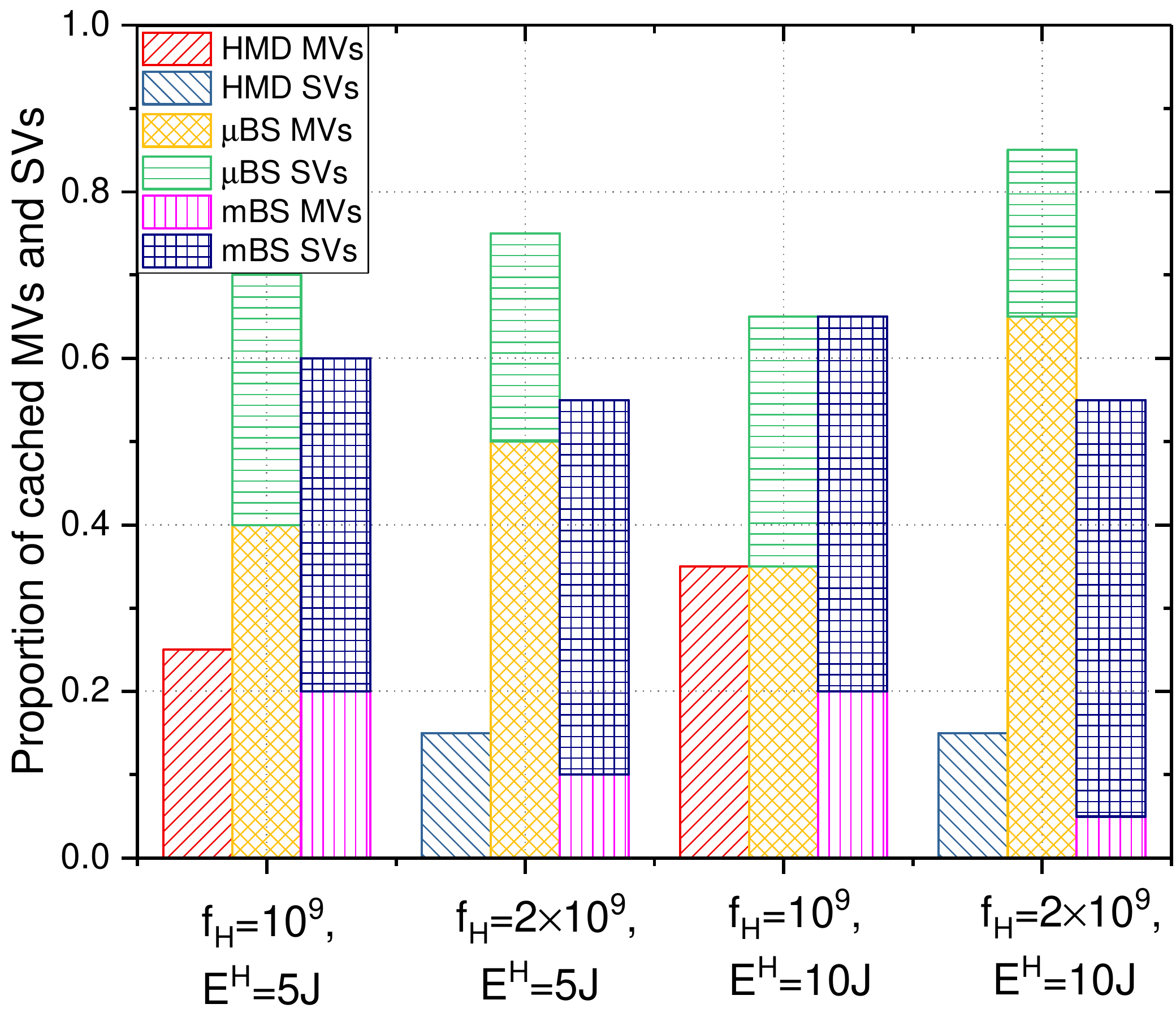}}
  \caption{(a) Proportion of different caching strategies under various $f_H$ and $E^H$, (b) proportion of different computing strategies under various $f_H$ and $E^H$, and (c) proportion of cached MVs and SVs under various $f_H$ and $E^H$. }
\end{figure}

\subsection{Caching and Computing Strategies}
We now evaluate the caching and computing strategies to obtain design insights.
Fig. \ref{fig:prop-caching-strategy} shows the proportion of different caching strategies under various $f_H$ and $E^H$. It can be seen that the locally cached viewpoints decrease when the local CPU-cycle frequency $f_H$ increases. This is due to the local energy consumption limitation. Higher computing frequency leads to higher energy consumption, as shown in Fig. \ref{fig:prop-computing-strategy}. Therefore, the proportion of local computing  should be reduced, otherwise it will cause HMD to heat up and reduce the user experience. Thus, more SVs are cached in the HMD. On the other hand, when $f_H$ is unchanged and $E^H$ is increased, the proportion of local caching or simultaneously cached in $\mu$BS and mBS increases. This is because the local energy consumption limitation  is loosen, and more MVs can be cached locally or at BSs. Then more MVs are delivered to the HMD over \textcolor{blue}{sub-6 GHz} links or mmWave links and computed into SVs locally, as shown in the proportion of local computing in Fig. \ref{fig:prop-computing-strategy}. It is worth noting that when $f_H=10^9$, the proportion of cached at $\mu$BS or mBS decreases with the increase of $E^H$. This is because the energy consumption limitation is strict at HMD  when $E^H$ is small, thus more SVs should be cached in the limited caching capacity at $\mu$BSs and mBSs to avoid the backhaul retrieve delay. When $E^H$ increases, more MVs can be cached, so more viewpoints can be cached simultaneously at $\mu$BSs and mBSs.

The proportion of cached MVs and SVs under various $f_H$ and $E^H$ is shown in Fig. \ref{fig:prop-caching-2D3D}. It can be seen that  SVs are mainly cached at mBSs, while MVs are  more cached at $\mu$BS. This indicates that it is beneficial to cache more SVs at mBSs and deliver them over mmWave links, while cache more MVs at $\mu$BSs  and deliver them  over \textcolor{blue}{sub-6 GHz} links. This is reasonable, because delivering more SVs over mmWave links will not increase the transmission delay much due to the large mmWave bandwidth, and the computing delay is saved. On the contrary, more MVs can be delivered over \textcolor{blue}{sub-6 GHz} links  due to the relatively low bandwidth.

\section{Conclusion}
In this paper, we propose a dual-connectivity \textcolor{blue}{sub-6 GHz} and mmWave HetNet architecture empowered by mobile edge capability, with the aim of improving the reliability of VR delivery.
Based on the differentiated characteristics of \textcolor{blue}{sub-6 GHz} links and mmWave links, we utilize their complementary advantages to conduct a collaborative design to improve the reliability of  VR delivery.
From the perspective of stochastic geometry,  we first derive  closed-form expressions for the reliability of VR delivery.
We propose a link selection strategy based on the minimum-delay delivery, and derive the VR delivery probability over  \textcolor{blue}{sub-6 GHz} links and mmWave links. We theoretically show the necessity of \textcolor{blue}{sub-6 GHz} links to improve the reliability despite the large mmWave bandwidth. We then formulate a joint caching and computing optimization problem to maximize the reliability of  VR delivery. By analyzing the coupling caching and computing strategies, we further transform the problem into a MMKP and propose a BFBB algorithm to obtain the optimal solution. To further reduce the complexity of the algorithm, we leverage DCP algorithm to obtain a sub-optimal solution. Numerical simulations demonstrate the performance improvement using the proposed algorithms, and shows great promise to improve the  VR delivery reliability in mobile-edge empowered DC \textcolor{blue}{sub-6 GHz} and mmWave HetNets.

In future work, a prospective direction is to improve the quality of experience (QoE) for VR users in DC \textcolor{blue}{sub-6 GHz} and mmWave HetNets. The QoE of VR video transmission is influenced by many factors such as the video rendering, quality of viewpoints, end-to-end delay, and delivery reliability, which is more challenging for modeling and optimization. A QoE-driven cross-layer design framework for mobile-edge empowered DC \textcolor{blue}{sub-6 GHz} and mmWave HetNets is anticipated, in which resource coordination that dynamically adapts to network conditions can be designed to achieve QoE enhancement.
%
\appendices
\section{Proof of Proposition 1} \label{AppendixA}
According to (\ref{SINR-mu}) and (\ref{rel-def}), the reliability of VR delivery over \textcolor{blue}{sub-6 GHz} links can be derived as
\begin{footnotesize}
\begin{align}\label{proof-muwave-reliability}
  \mathcal{R}_j^{\mu}(D_j^\mu, T_j^{\mu, t})  = \mathbb{P}\left[ \frac{P_{\mu} h_j^{\mu} r^{-\alpha_\mu} }{I_j^{\mu} + \sigma_{\mu}^2  } > \nu_j^{\mu} \right] & = \int_0^{\infty} \mathbb{P}\left[\frac{P_{\mu} h_j^{\mu} r^{-\alpha_\mu} }{I_j^{\mu} + \sigma_{\mu}^2  } > \nu_j^{\mu} \Big| r  \right] f_r(r) \mathrm{d}r, \nonumber  \\
  & \overset{(a)} = \int_0^{\infty} \mathrm{e}^{- \pi \lambda_\mu r^2} \mathrm{e}^{-\nu_j^{\mu} r^{\alpha_\mu} P_\mu^{-1} \sigma_{\mu}^2 } \mathcal{L}_{I_r}(\nu_j^{\mu} r^{\alpha_\mu} P_\mu^{-1}) \cdot 2 \pi \lambda_\mu r \mathrm{d} r,
\end{align}
\end{footnotesize}
$\!\!$where (a) follows by using the fact that $h_j^\mu $ obeys exponential distribution, and $\mathcal{L}_{I_j^\mu}(\nu_j^{\mu} r^{\alpha_\mu} P_\mu^{-1})$ is the Laplace transform of random variable $I_r$. Let $s = \nu_j^{\mu} r^\alpha_\mu P_\mu^{-1}$, we have
\begin{footnotesize}
\begin{align}
  \mathcal{L}_{I_j^\mu} (s)  = \mathbb{E}_{\Phi_\mu, h}[\mathrm{e}^{-s I_j^\mu}] = \mathbb{E}_{\Phi_\mu} \left[ \prod_{x \in \Phi_\mu \backslash b(o, r)} \frac{s \ell(x)}{1 + s \ell(x)} \right]  & \overset{(b)} = \exp \left( - 2 \pi \lambda_{\mu} \int_r^{\infty} \frac{sx}{s + x^{\alpha_{\mu}}} \mathrm{d} x \right), \nonumber \\
  & \overset{(c)} = \exp(\pi \lambda_\mu s^{\delta_\mu} \Gamma(1 + \delta_\mu) \Gamma(1 - \delta_\mu ) - \pi \lambda_\mu r^2 H_{\delta_\mu} (r^{\alpha_\mu}/s)), \nonumber
\end{align}
\end{footnotesize}

\vspace{-8mm}
$\!\!\!\!\!\!$where  $\ell(x) =  x^{-\alpha_\mu}$, $\delta_\mu = 2 / \alpha_\mu$. (b) follows from the probability generating functional (PGFL) of the PPP \cite{chiu2013stochastic}, and (c) follows from the use of gamma function or Gauss hypergeometric function for integration, where   $H_{\delta}(x) \triangleq {_2}F_{1}(1,\delta; 1+\delta; -x) $ is the Gauss hypergeometric function. By applying the Gauss-Laguerre Quadrature \cite{golub1969calculation}, the desired proof is obtained.

\vspace{-1mm}
\section{Proof of Proposition 2} \label{AppendixB}
According to (\ref{SINR-m}) and (\ref{rel-def}), the reliability of VR delivery over mmWave links is derived as
\vspace{-5mm}

\begin{small}
\begin{align}
  \mathcal{R}_j^{m}(D_j^m, T_j^{m, t})    = & \ \mathbb{P} \Bigg[\frac{P_{m} h_j^{m} G r^{- \alpha_m}}{ I_j^m + \sigma_{m}^2 } > \nu_j^{m} \Bigg], \nonumber \\
\overset{(d)} = & \ \int_{0}^{\infty} \sum_{i \in \{ \mathrm{L, N}\}} \rho_i(r) \left\{ 1- \mathbb{E}_{I_j^m} \left[ \left( 1-\mathrm{exp} \left( -\frac{\eta_i \nu_j^{m} r^{\alpha_i}  ( I_j^m + \sigma_{m}^2 ) }{P_{m} G}\right) \right)^{N_i} \right]  \right\} f_r(r)  \mathrm{d} r,  \nonumber \\
\overset{(e)} = & \ \int_{0}^{\infty} \sum_{i \in \{ \mathrm{L, N}\}} \rho_i(r) \left\{ \sum_{k=1}^{N_i} (-1)^{k+1} \binom{N_i}{k}  \mathrm{e}^{ -\frac{k \eta_i \nu_j^{m} r^{\alpha_i}  \sigma_{m}^2 }{P_{m} G} } \mathcal{L}_{I_j^m} \left( \frac{k \eta_i \nu_j^{m} r^{\alpha_i}  \sigma_{m}^2 }{P_{m} G} \right) \right\} f_r(r) \mathrm{d} r, \nonumber
\end{align}
\end{small}
\vspace{-5mm}

$\!\!\!\!\!\!$where (d) follows from the Alzer's approximation of a gamma random variable \cite{alzer1997some}. (e) follows by using Binomial theorem and the assumption that $N_i$ is an integer.
\vspace{-1mm}
\begin{small}
\begin{align}
   \mathcal{L}_{I_j^m} \left( s_i \right) \overset{(f)} = & \ \mathbb{E}_{I_j^m}  \left[ \prod_{{\ell \in \Phi_{m} \backslash b_m }} \mathbb{E}_h \left[ \mathrm{exp} \bigg( -s_{i}  h_\ell P_{m} G r^{-\alpha_{i}}\bigg)\right] \right], \nonumber  \\
   \overset{(g)} = & \ \mathrm{exp} \left[ -2 \pi  \lambda_{m} p_G \int_r^{\infty} \left( 1 - \mathbb{E}_{h} \left[ \mathrm{e}^{-s_{i} h_\ell P_{m} G t^{-\alpha_{i} }}  \right]  \right)  t \mathrm{d} t \right], \nonumber \\
    \overset{(h)} = & \ \mathrm{exp} \left[ -2 \pi  \lambda_{m} p_G \int_r^{\infty} \left( 1 - \frac{1}{\left(1 + s_{i} P_{m} G t^{-\alpha_i} / N_i\right)^{N_i}}\right) t \mathrm{d} t \right], \label{Laplace-mmWave}
\end{align}
\end{small}
$\!\!$where (f) follows from the i.i.d. distribution of $h$ and the independence of the PPP. (g) follows

$\!\!\!\!\!\!$by computing the PGFL of the PPP. (h) follows by computing the moment generating function of the gamma random variable $h$. Applying the integral formula of powers of $t$ and powers of binomials \cite{gradshteyn2014table}, (\ref{Laplace-mmWave}) can be written in the form of Gauss hypergeometric function, then the desired result is obtained.
\end{spacing}




\end{document}